\def\qed{\rule{2mm}{2mm}}
\def\indep{\perp \!\!\! \perp}
\newtheorem{theorem}{Theorem}[section]
\newtheorem{lemma}{Lemma}[section]
\newtheorem{corollary}{Corollary}[section]
\theoremstyle{definition}
\newtheorem{remark}{Remark}[section]
\newtheorem{assumption}{Assumption}[section]
\DeclareMathOperator*{\var}{Var}
\DeclareMathOperator*{\cov}{Cov}
\newcommand{\cp}{\stackrel{P}{\rightarrow}}
\newcommand{\cd}{\stackrel{d}{\rightarrow}}
\begin{document}

\author{
Yuehao Bai \\
Department of Economics\\
University of Southern California \\
\url{yuehao.bai@usc.edu}
\and
Hongchang Guo \\
Department of Economics\\
Northwestern University\\
\url{hongchangguo2028@u.northwestern.edu}
\and
Azeem M.\ Shaikh\\
Department of Economics\\
University of Chicago \\
\url{amshaikh@uchicago.edu}
\and
Max Tabord-Meehan\\
Department of Economics\\
University of Chicago \\
\url{maxtm@uchicago.edu}
}

\bigskip

\title{Inference in Experiments with Matched Pairs \\ and Imperfect Compliance\thanks{We thank Alex Torgovitsky for helpful discussions. The fourth author acknowledges support from NSF grant SES-2149408.}}

\maketitle

\vspace{-0.3in}

\begin{spacing}{1.2}
\begin{abstract}
This paper studies inference for the local average treatment effect in randomized controlled trials with imperfect compliance where treatment status is determined according to ``matched pairs.''   By ``matched pairs,'' we mean that units are sampled i.i.d.\ from the population of interest, paired according to observed, baseline covariates and finally, within each pair, one unit is selected at random for treatment.  Under weak assumptions governing the quality of the pairings, we first derive the limit distribution of the usual Wald (i.e., two-stage least squares) estimator of the local average treatment effect.  We show further that conventional heteroskedasticity-robust estimators of the Wald estimator's limiting variance are generally conservative, in that their probability limits are (typically strictly) larger than the limiting variance.  We therefore provide an alternative estimator of the limiting variance that is consistent.  Finally, we consider the use of additional observed, baseline covariates not used in pairing units to increase the precision with which we can estimate the local average treatment effect.  To this end, we derive the limiting behavior of a two-stage least squares estimator of the local average treatment effect which includes both the additional covariates in addition to pair fixed effects, and show that its limiting variance is always less than or equal to that of the Wald estimator.  To complete our analysis, we provide a consistent estimator of this limiting variance.  A simulation study confirms the practical relevance of our theoretical results.  Finally, we apply our results to revisit a prominent experiment studying the effect of macroinsurance on microenterprise in Egypt.

\end{abstract}
\end{spacing}

\noindent KEYWORDS: Matched pairs, randomized controlled trial, experiments, noncompliance, imperfect compliance

\noindent JEL classification codes: C12, C14

\thispagestyle{empty} 
\newpage
\setcounter{page}{1}

\section{Introduction}

This paper studies inference for the local average treatment effect in randomized controlled trials with imperfect compliance, when treatment status is determined according to a ``matched pairs" design. By ``matched pairs,'' we mean that units are sampled i.i.d.\ from the population of interest, paired according to observed, baseline covariates and finally, within each pair, one unit is selected at random for treatment. This method is used routinely in all parts of the sciences. Indeed, commands to facilitate its implementation are included in popular software packages, such as \texttt{sampsi} in Stata. References to a variety of specific examples can be found, for instance, in the following textbook treatments of randomized experiments: \cite{glennerster2014running}, \cite{riach2002field}, \cite{rosenberger2015randomization}. See also \cite{bruhn2009pursuit}, who, based on a survey of selected development economists, report that 56\% of researchers have used such a design at some point. In many such experiments, compliance may be imperfect:  some recent examples of experiments featuring both matched pairs and imperfect compliance are  \cite{groh2016macroinsurance} and \cite{resnjanskij2021can}.  Under weak assumptions that ensure pairs are formed so that units within pairs are suitably ``close'' in terms of observed, baseline covariates, we derive a variety of results pertaining to inference about the local average treatment effect in such experiments.  

We first study the behavior of the usual Wald (i.e., two-stage least squares) estimator of the local average treatment effect.  When all observed, baseline covariates are used in forming pairs, we find that the estimator is efficient among all estimators for the local average treatment effect in the sense that it achieves the lower bound on the limiting variance developed in \cite{bai2023efficient} over a broad class of treatment assignment schemes that hold the marginal probability of treatment assignment equal to one half, thereby including matched pairs, in particular, as a special case.  On the other hand, we find that the conventional heteroskedasticity-robust estimator obtained from a two-stage least squares regression (with and without pair fixed effects) is conservative in that its limit in probability is always weakly larger than the limiting variance, and strictly larger unless treatment effect heterogeneity is constrained in a particular fashion. As a result, we provide an alternative estimator of the limiting variance and show that it is consistent. In our simulation study, we find that tests using the Wald estimator together with conventional heteroskedasticity-robust estimators may as a consequence have worse power when compared to tests using the Wald estimator together with our estimator of its limiting variance. 

Next, we analyze the behavior of covariate adjusted Wald estimators for settings in which there are additional observed baseline covariates that were not used when pairing units.  We first derive the limiting behavior of a class of covariate-adjusted estimators indexed by different ``working models'' for the conditional expectations of the outcome and treatment take-up with respect to the covariates.  Importantly, these working models need not be correctly specified for the true conditional expectations in order for this estimator to remain consistent for the local average treatment effect, but we show that the limiting variance of the estimator is minimized, in particular, when they are correctly specified. We then specialize these results to the case where the working models are linear as a function of the covariates, and derive the form of the optimal linear working model. We further show that the resulting optimal linear covariate adjusted estimator can be computed using a two-stage least squares estimator using treatment assignment as an instrument for treatment status in an instrumental variables regression of the outcome on the following quantities: treatment status; (functions of) the observed baseline covariates; and pair fixed effects. We emphasize, however, that the resulting heteroskedasticity robust variance estimator is not guaranteed to be consistent, and so we also provide a suitable consistent estimator of the asymptotic variance. In our simulation study, we find that tests using our optimal linear covariate adjusted estimator together with our consistent estimator of its limiting variance have better power than tests constructed using the unadjusted Wald estimator or tests constructed using sub-optimal linear covariate adjustments.

Our paper builds upon the analysis of \cite{bai2022mp}, who analyzed the behavior of the difference-in-means estimator of the average treatment effect in context of experiments with matched pairs and perfect compliance.  Our covariate-adjusted estimator is inspired by the analysis in \cite{bai2023covariate}, who studied the use of additional, observed, baseline covariates in experiments with matched pairs and perfect compliance to improve the precision with which we can estimate the average treatment effect \citep[similar results have also been obtained by][]{cytrynbaum2023covariate}.  We emphasize, however, that none of the aforementioned papers permit imperfect compliance, which, as argued by \cite{athey2017econometrics}, is one of the most common complications in even the most well designed experiments.  We note, however, that imperfect compliance has been studied in the context of randomized controlled trials with other treatment assignment schemes, such as stratified block randomization: see, for example, \cite{ansel2018ols}, \cite{bugni2021inference}, and \cite{jiang2022improving}. We also emphasize that all of these papers, like ours, carry out their analysis in a superpopulation sampling framework.  In this way, our analysis differs from the analysis of experiments with a finite population sampling framework \citep[see for instance][among many others]{athey2017econometrics, chaisemartin2022at, ding2017bridging, ren2021model}.


The remainder of the paper is organized as follows.  In Section \ref{sec:setup} we describe our setup and notation.  In particular, there we describe the precise sense in which we require that units in each pair are ``close'' in terms of their baseline covariates. Our main results concerning the Wald estimator are contained in Section \ref{sec:main}.  In Section \ref{sec:adjustment}, we develop results pertaining to our covariate-adjusted estimator that exploits additional observed, baseline covariates not used in pairing units.  In Section \ref{sec:sims}, we examine the practical relevance of our theoretical results via a small simulation study. In Section \ref{sec:empirical-application}, we provide a brief empirical illustration of our proposed tests using data from an experiment in \cite{groh2016macroinsurance}.  Finally, we conclude in Section \ref{sec:recs} with some recommendations for empirical practice guided by both our theoretical results and our simulation study.  As explained further in that section, we do not recommend the use of the Wald estimator with the conventional heteroskedasticity-robust estimator of its limiting variance because it is conservative in the sense described above; we instead encourage the use of the Wald estimator with our consistent estimator of its limiting variance because it is asymptotically exact, and, as a result, can be considerably more powerful.  When there are additional, observed, baseline covariates that are not used when forming pairs, we recommend the use of our covariate-adjusted Wald estimator with our consistent estimator of its limiting variance. Proofs of all results are provided in the  Appendix.

\section{Setup and Notation} \label{sec:setup}

Let $Y_{i} \in \mathbf{R}$ denote the (observed) outcome of the $i$th unit, $A_i \in \{0, 1\}$ be an indicator for whether or not unit $i$ is assigned to treatment, $D_i \in \{0, 1 \}$ be an indicator for whether or not unit $i$ decides to take up treatment, $X_i \in \mathbf{R}^{k_x}$ denote observed, baseline covariates for the $i$th unit which are used for matching, and $W_i \in \mathbf{R}^{k_w}$ denote observed, baseline covariates for the $i$th unit which will be used when we consider covariate adjustment in Section \ref{sec:adjustment}. In contrast to the setting considered in \cite{bai2022mp}, we allow for imperfect compliance, i.e. for $D_i \ne A_i$. Further denote by $Y_i(d)$ the potential outcome of the $i$th unit if they make treatment decision $d \in \{0, 1\}$, and by $D_i(a)$ the potential treatment decision of the $i$th unit if assigned to treatment $a \in \{0, 1\}$. The observed treatment decision and potential treatment decision are related to treatment assignment via the usual relationship 
\begin{equation}\label{eq:obsD}
D_i = D_i(1)A_i + D_i(0)(1 - A_i)~,
\end{equation}
and the observed outcome and potential outcome are related to treatment decision via the relationship
\begin{equation}\label{eq:obsY}
Y_i = Y_i(1)D_i + Y_i(0)(1 - D_i)~.
\end{equation}

\noindent We will also often make use of the following alternative representation for the observed outcome, which is numerically equivalent to (\ref{eq:obsY}):
\begin{equation}\label{eq:obsY'}
Y_i = \tilde Y_i(1)A_i + \tilde Y_i(0)(1 - A_i)~,
\end{equation}
where
\begin{equation}\label{eq:TildeY}
\tilde Y_i(a) = Y_i(1)D_i(a) + Y_i(0)(1 - D_i(a))
\end{equation}
for $a \in \{0, 1\}$. In words, $\tilde Y_i(a)$ represents the ``intention-to-treat" potential outcome for unit $i$ when assigned to treatment $a \in \{0, 1\}$. 

Following \cite{angristimbens1994late}, each participant in the experiment can be categorized into one of four types: units for which $D_i(1) = 1$ and $D_i(0) = 0$ are referred to as compliers, units for which $D_i(1) = 1$ and $D_i(0) = 1$ are referred to as always takers, units for which $D_i(1) = 0$ and $D_i(0) = 0$ are referred to as never takers, and finally units for which $D_i(1) = 0$ and $D_i(0) = 1$ are referred to as defiers.  We use the notation
\begin{equation}\label{eq:compliers}
C_i = I\left\{ D_i(1) = 1, D_i(0) = 0 \right\}
\end{equation}
below to indicate whether or not unit $i$ is a complier.

Throughout the paper we will study inference on samples with $2n$ observations, so that $n$ indexes the number of pairs of observations. For a random variable indexed by $i$, say for example $A_i$, it will be useful to denote by $A^{(n)}$ the random vector $(A_1, ..., A_{2n})$. Denote by $P_n$ the distribution of the observed data $(Y^{(n)}, D^{(n)}, A^{(n)}, X^{(n)}, W^{(n)})$, and by $Q_n$ the distribution of $(Y^{(n)}(1), Y^{(n)}(0), D^{(n)}(1), D^{(n)}(0), X^{(n)}, W^{(n)})$. Note that $P_n$ is jointly determined by (\ref{eq:obsD}), (\ref{eq:obsY}), $Q_n$, and the mechanism for determining treatment assignment. We assume that our sample consists of $2n$ i.i.d.\ observations i.e. that $Q_n = Q^{2n}$, where $Q$ is the marginal distribution of $(Y_i(1), Y_i(0), D_i(1), D_i(0), X_i, W_i)$. We therefore state our assumptions below in terms of assumptions on $Q$ and the mechanism for determining treatment assignment. Indeed, we will not make reference to $P_n$ in the sequel and all operations are understood to be under $Q$ and the mechanism for determining treatment assignment.

Our object of interest is the local average treatment effect, which may be expressed in our notation as
\begin{equation} \label{eq:LATE}
\Delta(Q) = E\left[ Y_i(1) - Y_i(0) | C_i = 1 \right].
\end{equation}

For a pre-specified choice of $\Delta_0$, the testing problem of interest is 
\begin{equation}\label{eq:H0}
H_0: \Delta(Q) = \Delta_0\  \text{versus}\  H_1: \Delta(Q) \neq \Delta_0
\end{equation}
at level $\alpha \in (0, 1)$.

We begin by describing our primary assumptions on the data generating process.

\begin{assumption}\label{ass:Q}
The distribution $Q$ is such that \vspace{-.25cm}
\begin{enumerate}[\rm (a)]
\item $0 < E\left[ \var\left[ \tilde{Y}_i(a) - \Delta(Q)D_i(a) | X_i \right] \right]$ for $a \in \{0, 1  \}$.
\item $E\left[ Y_i(a)^2 \right] < \infty$ for $a \in \{0, 1 \}$. 
\item $E[ Y_i(1)^r D_i(a) | X_i = x ]$ and $E[ Y_i(0)^r (1 - D_i(a)) | X_i = x]$ are Lipschitz for $a = 0, 1$ and $r = 0, 1, 2$.
\item $P\{D_i(1) \geq D_i(0)\} = 1$.
        \item $P \{C_i = 1\} > 0$.
    \end{enumerate}
    \end{assumption}

Assumption \ref{ass:Q}(a)--(b) are mild restrictions imposed to rule out degenerate situations and to permit the application of suitable laws of large numbers and central limit theorems. Assumption \ref{ass:Q}(c) is a smoothness requirement that ensures that units that are ``close" in terms of their baseline covariates are suitably comparable. Similar smoothness requirements are also considered in \cite{bai2022mp}, and generally play a key role in establishing the asymptotic exactness of our proposed tests. Assumptions \ref{ass:Q}(d)--(e) are the standard ``monotonicity" and ``relevance" conditions of \cite{angristimbens1994late} which ensure that the probability limit of the Wald estimator which we define in Section \ref{sec:main} can be interpreted as the local average treatment effect $\Delta(Q)$.

Next, we describe our assumptions on the mechanism determining treatment assignment. Following the notation in \cite{bai2022mp}, the $n$ pairs can be represented by the sets
\begin{align*}
\{ \pi(2j - 1), \pi(2j)\} \ \text{for}\ j = 1, ..., n~,
\end{align*}
where $\pi = \pi_n\left(X^{(n)}\right)$ is a permutation of $2n$ elements. Given such a $\pi$, we assume that treatment status is assigned as described in the following assumption:
\begin{assumption}\label{ass:assignment}
Treatment status is assigned so that
\[ \left( Y^{(n)}(1), Y^{(n)}(0), D^{(n)}(1), D^{(n)}(0), W^{(n)} \right) \indep A^{(n)}|X^{(n)}~, \]
and conditional on $X^{(n)}$, $(A_{\pi(2j-1)}, A_{\pi(2j)})$, $j = 1, ..., n$ are i.i.d.\ and each uniformly distributed over the values in $\{(0, 1), (1, 0)\}$.
\end{assumption}

We further require that the units in each pair be ``close" in terms of their baseline covariates in the following sense:

\begin{assumption}\label{ass:pairs}
The pairs used in determining treatment status satisfy 
\begin{align*}
\frac{1}{n} \sum_{1\leq j\leq n} \| X_{\pi(2j)} - X_{\pi(2j-1)} \|_2^r \cp 0~,
\end{align*}
for $r\in\{1, 2\}$.
\end{assumption}

We will also sometimes require that the distances between units in adjacent pairs be ``close" in terms of their baseline covariates:
\begin{assumption}\label{ass:pairsofpairs}
The pairs used in determining treatment status satisfy
\begin{align*}
\frac{1}{n}\sum_{1\leq j\leq \lfloor \frac{n}{2} \rfloor} \| X_{\pi(4j-k)} - X_{\pi(4j-l)} \|_2^2 \cp 0~,
\end{align*}
for $k\in\{ 2,3\}$ and $l\in \{ 0,1\}$.
\end{assumption}

\cite{bai2022mp}, \cite{bai2023inference}, and \cite{cytrynbaum2023designing} provide several examples of pairing algorithms which satisfy Assumptions \ref{ass:pairs}--\ref{ass:pairsofpairs}. The simplest such example is when $X_i \in \mathbf{R}$, in which case we can order units from smallest to largest according to $X_i$ and pair adjacent units. It then follows from Theorem 4.1 in \cite{bai2022mp} that Assumptions \ref{ass:pairs}--\ref{ass:pairsofpairs} are satisfied as long as $E[X_i^2] < \infty$.

\section{Main Results}\label{sec:main}

\subsection{Asymptotic Behavior of the Wald Estimator}\label{subsec:MPIV}
In this section, we study the asymptotic properties of the standard Wald estimator (i.e., the two-stage least squares estimator of $Y_i$ on $D_i$ using $A_i$ as an instrument) of $\Delta(Q)$ under a matched pairs design. In order to introduce this estimator, define
\begin{equation}\label{eq:meanYa}
\begin{split}
\hat{\psi}_n(a) = \frac{1}{n} \sum_{1\leq i\leq 2n: A_i = a} Y_i~,
\end{split}
\end{equation}
\begin{equation}\label{eq:meanDa}
\begin{split}
\hat{\phi}_n(a) = \frac{1}{n} \sum_{1\leq i\leq 2n: A_i = a} D_i~.
\end{split}
\end{equation}
Using this notation, the Wald estimator is defined as
\begin{equation}\label{eq:hatDelta}
\begin{split}
\hat{\Delta}_n = \frac{\hat{\psi}_n(1) - \hat{\psi}_n(0)}{\hat{\phi}_n(1) - \hat{\phi}_n(0)}~.
\end{split}
\end{equation}
Note that this estimator may be obtained as the ratio of the estimator of the coefficient of $A_i$ in an ordinary least squares regression of $Y_i$ on a constant and $A_i$ (the ``reduced form") to the estimator of the coefficient of $A_i$ in an ordinary least squares regression of $D_i$ on a constant and $A_i$ (the ``first stage"). Theorem \ref{theorem:main} establishes the limiting distribution of $\hat{\Delta}_n$ under a matched pairs design. 
\begin{theorem}\label{theorem:main}
Suppose $Q$ satisfies Assumption \ref{ass:Q} and the treatment assignment mechanism satisfies Assumptions \ref{ass:assignment}--\ref{ass:pairs}.  Then,
\begin{align*}
\sqrt{n} \left( 
\hat{\Delta}_n
-
\Delta(Q)
\right) \cd 
N\left(0, \nu^2   \right),
\end{align*}
where
\begin{multline*}
\nu^2 = \frac{1}{P \{C_i = 1\}^2}\Biggr (E[\var[Y^\ast_i(1)|X_i]] + E[\var[Y^\ast_i(0)|X_i]] \\
+ \frac{1}{2}E\left[((E[Y^\ast_i(1)|X_i] - E[Y^\ast_i(1)]) - (E[Y^\ast_i(0)|X_i] - E[Y^\ast_i(0)]))^2\right]\Biggr )~,
\end{multline*}
with
\begin{equation}\label{eq:Ystar-a}
Y^\ast_i(a) = \tilde Y_i(a) - \Delta(Q)D_i(a)~,
\end{equation}
for $a \in \{0, 1\}$.

\end{theorem}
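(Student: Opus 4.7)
The plan is to express $\sqrt{n}(\hat{\Delta}_n - \Delta(Q))$ as a ratio and apply a Slutsky-type argument, reducing the problem to a central limit theorem for a difference-in-means under matched pairs that was already established in \cite{bai2022mp}. Starting from
\begin{equation*}
\hat{\Delta}_n - \Delta(Q) = \frac{(\hat{\psi}_n(1) - \hat{\psi}_n(0)) - \Delta(Q)(\hat{\phi}_n(1) - \hat{\phi}_n(0))}{\hat{\phi}_n(1) - \hat{\phi}_n(0)}~,
\end{equation*}
the key observation is that (\ref{eq:obsD}) and (\ref{eq:obsY'}) together yield
\begin{equation*}
Y_i - \Delta(Q) D_i = Y^\ast_i(1) A_i + Y^\ast_i(0)(1 - A_i)~,
\end{equation*}
with $Y^\ast_i(a)$ as in (\ref{eq:Ystar-a}). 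Consequently, the numerator equals the difference-in-means estimator applied to the transformed observed outcome $Y_i - \Delta(Q) D_i$, whose pair of potential outcomes is exactly $(Y^\ast_i(1), Y^\ast_i(0))$.

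Next I would verify that the ``intention-to-treat'' mean difference in the transformed outcome vanishes: by the standard LATE argument under Assumption \ref{ass:Q}(d), $E[\tilde Y_i(1) - \tilde Y_i(0)] = \Delta(Q) P\{C_i = 1\}$ and $E[D_i(1) - D_i(0)] = P\{C_i = 1\}$, so $E[Y^\ast_i(1) - Y^\ast_i(0)] = 0$. Thus, $\sqrt{n}$ times the numerator is a difference-in-means estimator centered at its own mean. I would then invoke the matched pairs CLT of \cite{bai2022mp} applied to $Y^\ast_i(\cdot)$, which yields
\begin{equation*}
\sqrt{n}\bigl[(\hat{\psi}_n(1) - \hat{\psi}_n(0)) - \Delta(Q)(\hat{\phi}_n(1) - \hat{\phi}_n(0))\bigr] \cd N(0, \sigma_\ast^2)~,
\end{equation*}
where $\sigma_\ast^2$ is precisely the parenthesized expression in the statement of the theorem, i.e., $\nu^2 \cdot P\{C_i = 1\}^2$.

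To apply that CLT, I must verify its hypotheses for the transformed potential outcomes. Finite second moments follow from Assumption \ref{ass:Q}(b) and $D_i(a) \in \{0,1\}$; positivity of the conditional variance is exactly Assumption \ref{ass:Q}(a); Assumption \ref{ass:assignment} supplies the required randomization; and Assumption \ref{ass:pairs} provides the required covariate closeness. The main technical step is the Lipschitz hypothesis: one needs $E[Y^\ast_i(a)^r \mid X_i = x]$ to be Lipschitz for $r \in \{1, 2\}$. Expanding $Y^\ast_i(a) = Y_i(1) D_i(a) + Y_i(0)(1 - D_i(a)) - \Delta(Q) D_i(a)$, and using $D_i(a) \in \{0,1\}$ to collapse cross terms in the square, these conditional moments are finite linear combinations of $E[Y_i(1)^r D_i(a) \mid X_i]$ and $E[Y_i(0)^r (1 - D_i(a)) \mid X_i]$ for $r \in \{0,1,2\}$, each of which is Lipschitz by Assumption \ref{ass:Q}(c).

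Finally, the denominator $\hat{\phi}_n(1) - \hat{\phi}_n(0)$ is again a matched pairs difference-in-means, now applied to $D_i$ with potential outcomes $D_i(a)$, so the corresponding law of large numbers from \cite{bai2022mp} gives $\hat{\phi}_n(1) - \hat{\phi}_n(0) \cp E[D_i(1) - D_i(0)] = P\{C_i = 1\} > 0$ by Assumptions \ref{ass:Q}(d)--(e). Slutsky's theorem then combines the two displays to deliver the asserted limit $N(0, \sigma_\ast^2 / P\{C_i = 1\}^2) = N(0, \nu^2)$. The only real obstacle is the bookkeeping in the Lipschitz verification; everything else is a direct application of the existing matched pairs machinery combined with the observed-vs-potential outcome identity for $Y_i - \Delta(Q) D_i$.
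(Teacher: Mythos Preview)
Your proposal is correct and takes essentially the same approach as the paper: both write $\sqrt{n}(\hat\Delta_n-\Delta(Q))$ as a ratio, show the denominator converges to $P\{C_i=1\}$, identify the numerator as the matched-pairs difference-in-means of the transformed outcomes $Y_i^\ast(a)$, verify $E[Y_i^\ast(1)-Y_i^\ast(0)]=0$, and then invoke the CLT machinery of \cite{bai2022mp} before applying Slutsky. The only cosmetic difference is that the paper reproduces the $A_n-B_n+C_n-D_n$ decomposition of Lemma S.1.4 in \cite{bai2022mp} explicitly, whereas you invoke that result as a black box after verifying its hypotheses (your Lipschitz bookkeeping is the step the paper leaves implicit).
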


We derive Theorem \ref{theorem:main} by reproducing arguments in Lemma S.1.4 in \cite{bai2022mp}, but replacing the usual potential outcomes $Y_i(a)$ with the transformed outcomes $Y_i^\ast(a)$\footnote{We note that although a version of Theorem \ref{theorem:main} can be obtained by directly appealing to more recent results in \cite{bai2023efficient}, the regularity conditions introduced in this paper are in fact weaker and, in our view, easier to interpret than the general conditions considered there.}. As a result, the the numerator of $\nu^2$ corresponds exactly to the limiting variance obtained in \cite{bai2022mp} with the usual potential outcomes $Y_i(a)$ replaced with $Y_i^\ast(a)$.  In particular, when there is perfect compliance, so that $D_i = A_i$, $P \{C_i = 1\} = 1$, and $D_i(a) = a$, the limiting variance we obtain in Theorem \ref{theorem:main} corresponds exactly to the limiting variance derived in \cite{bai2022mp}.  It can be shown that our expression for $\nu^2$ attains the efficiency bound derived in \cite{bai2023efficient} over a broad class of treatment assignments which include matched pairs as a special case \citep[it can also be shown that this bound coincides with the bounds derived in][in settings  with observational data with i.i.d.\ assignment when $P\{A_i = a|X_i\} = 1/2$, where the local average treatment effect is estimated non-parametrically using the covariates $X_i$: see Lemma \ref{lemma:efficiency-bound} for details.]{frolich2007nonparametric, hong2010semiparametric} 


\subsection{Variance Estimation}\label{sec:variance}


In this section, we construct a consistent variance estimator for the limiting variance $\nu^2$, and then contrast this to the asymptotic behavior of standard regression-based variance estimators. As noted in the discussion following Theorem \ref{theorem:main}, the expression for $\nu^2$ corresponds exactly to the limiting variance obtained in \cite{bai2022mp} with the usual potential outcomes $Y_i(a)$ replaced with the transformed outcomes $Y_i^\ast(a)$. We thus follow the variance construction from \cite{bai2022mp}, but with a feasible version of $Y_i^\ast(a)$ defined as
\begin{equation} \label{eq:transformed}
\hat{Y}_i = Y_i - \hat{\Delta}_nD_i~.
\end{equation}
This strategy leads to the following variance estimator:
\begin{equation} \label{eq:mpiv-varest}
\hat{\nu}^2_n = \frac{\hat{\tau}^2_n - \frac{1}{2}(\hat{\lambda}^2_n + \hat{\Gamma}^2_n)}{\left(\hat{\phi}_n(1) - \hat{\phi}_n(0)\right)^2}~,   
\end{equation}
where
\begin{align*}
\hat{\tau}^2_n & = \frac{1}{n}\sum_{1 \le j \le n}(\hat{Y}_{\pi(2j)} - \hat{Y}_{\pi(2j-1)})^2 \\
\hat{\lambda}^2_n & = \frac{2}{n}\sum_{1 \le j \le \lfloor \frac{n}{2} \rfloor}\left(\hat{Y}_{\pi(4j-3)} - \hat{Y}_{\pi(4j-2)}\right)\left(\hat{Y}_{\pi(4j-1)} - \hat{Y}_{\pi(4j)}\right)\left(A_{\pi(4j-3)} - A_{\pi(4j-2)}\right)\left(A_{\pi(4j-1)} - A_{\pi(4j)}\right) \\
\hat{\Gamma}_n & = \frac{1}{n}\sum_{1 \le i \le 2n: A_i = 1}\hat{Y}_i - \frac{1}{n}\sum_{1 \le i \le 2n: A_i = 0}\hat{Y}_i~.
\end{align*}
Note that the construction of the numerator of $\hat{\nu}^2_n$ can be motivated using a similar intuition to what has been previously discussed in \cite{bai2022mp}: to consistently estimate
\[ E[(E[Y^\ast_i(1)|X_i] - E[Y^\ast_i(0)|X_i])^2]~, \]
ideally we would like access to four different units with similar values of $X_i$, of which two are treated. However, because each pair only contains two units, we need to average across ``pairs of pairs" of units, where two pairs are grouped together so that they are ``close'' in terms of $X_i$. We establish the following consistency result for $\hat{\nu}^2_n$:

\begin{theorem}\label{theorem:varianceestimation}
Suppose $Q$ satisfies Assumption \ref{ass:Q} and the treatment assignment mechanism satisfies Assumptions \ref{ass:assignment}--\ref{ass:pairsofpairs}. Then,
\begin{align*}
\hat{\nu}_n^2 \cp \nu^2.
\end{align*}
\end{theorem}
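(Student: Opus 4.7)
The plan is to reduce the consistency of $\hat\nu_n^2$ to the matched-pairs variance estimator consistency result already established in \cite{bai2022mp} by applying it to the transformed outcome $Y_i^\ast = Y_i - \Delta(Q) D_i$, and then to argue that the feasibility substitution $\hat Y_i = Y_i^\ast + (\Delta(Q) - \hat\Delta_n) D_i$ introduces only asymptotically negligible terms. First, I would dispose of the denominator: Assumptions \ref{ass:Q} and \ref{ass:assignment} together with a routine WLLN give
\[
\hat\phi_n(1) - \hat\phi_n(0) \cp E[D_i(1) - D_i(0)] = P\{C_i=1\} > 0,
\]
so by Slutsky's theorem it suffices to show that the numerator of $\hat\nu_n^2$ converges in probability to $P\{C_i=1\}^2 \nu^2$, i.e., to the expression inside the large parentheses in the definition of $\nu^2$.

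Second, I would introduce the infeasible analogues $\tilde\tau_n^2, \tilde\lambda_n^2, \tilde\Gamma_n$ obtained by replacing $\hat Y_i$ with $Y_i^\ast$ throughout. Note that $Y_i^\ast = Y_i^\ast(1) A_i + Y_i^\ast(0)(1-A_i)$ by (\ref{eq:obsD})--(\ref{eq:TildeY}). Using that $D_i(a) \in \{0,1\}$ and expanding $Y_i^\ast(a)^r = (\tilde Y_i(a) - \Delta(Q) D_i(a))^r$ into monomials of the form $Y_i(1)^s D_i(a)$ and $Y_i(0)^s(1 - D_i(a))$ for $s\in\{0,1,2\}$, Assumption \ref{ass:Q}(c) transfers directly to Lipschitz conditions on $E[Y_i^\ast(a)^r | X_i]$ for $r \in \{0, 1, 2\}$. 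Together with the moment and non-degeneracy conditions in Assumption \ref{ass:Q}(a)--(b), this places the infeasible statistic $\tilde\tau_n^2 - \tfrac{1}{2}(\tilde\lambda_n^2 + \tilde\Gamma_n^2)$ exactly under the hypotheses of the matched-pairs variance estimator consistency result of \cite{bai2022mp} applied to the ``outcome'' $Y_i^\ast(a)$. Hence this infeasible numerator converges in probability to $P\{C_i=1\}^2 \nu^2$.

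Third, I would show that each feasible piece differs from its infeasible analogue by $o_P(1)$ by leveraging $\hat\Delta_n - \Delta(Q) = O_P(n^{-1/2})$ from Theorem \ref{theorem:main} and the boundedness of $D_i$. Writing $\hat Y_{\pi(2j)} - \hat Y_{\pi(2j-1)} = (Y_{\pi(2j)}^\ast - Y_{\pi(2j-1)}^\ast) + (\Delta(Q) - \hat\Delta_n)(D_{\pi(2j)} - D_{\pi(2j-1)})$ and expanding the square yields
\begin{align*}
\hat\tau_n^2 - \tilde\tau_n^2 &= 2(\Delta(Q) - \hat\Delta_n)\cdot \frac{1}{n}\sum_{1\le j\le n}(Y_{\pi(2j)}^\ast - Y_{\pi(2j-1)}^\ast)(D_{\pi(2j)} - D_{\pi(2j-1)}) \\
&\quad + (\Delta(Q) - \hat\Delta_n)^2 \cdot \frac{1}{n}\sum_{1\le j\le n}(D_{\pi(2j)} - D_{\pi(2j-1)})^2,
\end{align*}
and both empirical averages are $O_P(1)$ by Assumption \ref{ass:Q}(b) and $|D_i|\le 1$, so the whole expression is $o_P(1)$. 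The same decomposition and Cauchy--Schwarz control $\hat\lambda_n^2 - \tilde\lambda_n^2$, while $\hat\Gamma_n - \tilde\Gamma_n = (\Delta(Q) - \hat\Delta_n)\{\hat\phi_n(1) - \hat\phi_n(0)\} = o_P(1)$. Combining the three steps via Slutsky's theorem yields the result.

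The main obstacle is the second step: verifying cleanly that the conditions of \cite{bai2022mp}'s variance consistency theorem hold for the transformed outcome. The substantive content is that Assumption \ref{ass:Q}(c) for the three exponents $r\in\{0,1,2\}$ is tailored precisely to control the quadratic expansion of $Y_i^\ast(a)^2$ in terms of $Y_i(1), Y_i(0), D_i(a)$, which is exactly what is required; once this translation is done, everything else is a bounded-error perturbation argument.
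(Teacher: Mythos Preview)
Your proposal is correct and follows essentially the same route as the paper: reduce to the infeasible numerator built from $Y_i^\ast = Y_i - \Delta(Q)D_i$, invoke the \cite{bai2022mp} variance-consistency results (their Lemmas S.1.5--S.1.7) after checking that Assumption~\ref{ass:Q} transfers to the transformed outcomes, and then dispose of the feasibility error $\hat Y_i - Y_i^\ast = (\Delta(Q)-\hat\Delta_n)D_i$ term by term using $\hat\Delta_n\cp\Delta(Q)$ and the boundedness of $D_i$. The only cosmetic difference is that the paper uses consistency of $\hat\Delta_n$ rather than the stronger $O_P(n^{-1/2})$ rate, and bounds the $\hat\lambda_n^2-\tilde\lambda_n^2$ cross terms directly rather than via Cauchy--Schwarz.
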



Assumption \ref{ass:Q}(a) implies that $\nu^2 > 0$. We therefore immediately obtain the following corollary which establishes the asymptotic exactness of a $t$-test for the null hypothesis \eqref{eq:H0} constructed using the variance estimator $\hat{\nu}^2_n$:
\begin{corollary}\label{cor:exact_test}
Suppose $Q$ satisfies Assumption \ref{ass:Q} and the treatment assignment mechanism satisfies Assumptions \ref{ass:assignment}--\ref{ass:pairsofpairs}. Then,
\begin{align*}
\frac{\sqrt{n} \left( \hat{\Delta}_n - \Delta(Q) \right)}{\hat \nu_n} \cd N\left( 0, 1\right)~.
\end{align*}
\end{corollary}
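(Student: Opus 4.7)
The plan is to combine Theorem \ref{theorem:main}, Theorem \ref{theorem:varianceestimation}, and Assumption \ref{ass:Q}(a) via a routine application of Slutsky's theorem. All of the heavy lifting has already been done in the preceding results; the corollary is a two-line consequence.

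First, I would observe that Assumption \ref{ass:Q}(a) ensures $E[\var[\tilde Y_i(a) - \Delta(Q) D_i(a) | X_i]] = E[\var[Y_i^\ast(a) | X_i]] > 0$ for each $a \in \{0,1\}$. Combined with $P\{C_i = 1\} > 0$ from Assumption \ref{ass:Q}(e) and the nonnegativity of the third term in the expression for $\nu^2$ in Theorem \ref{theorem:main}, this yields $\nu^2 > 0$.

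Next, Theorem \ref{theorem:varianceestimation} gives $\hat\nu_n^2 \cp \nu^2$, so by the continuous mapping theorem (applied to $x \mapsto \sqrt{x}$, which is continuous at $\nu^2 > 0$), we obtain $\hat\nu_n \cp \nu > 0$. Theorem \ref{theorem:main} gives $\sqrt{n}(\hat\Delta_n - \Delta(Q)) \cd N(0, \nu^2)$. An application of Slutsky's theorem then yields
\[
\frac{\sqrt{n}(\hat\Delta_n - \Delta(Q))}{\hat\nu_n} \cd \frac{N(0,\nu^2)}{\nu} = N(0,1),
\]
which is the desired conclusion.

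There is essentially no obstacle here, since the corollary is stated precisely so that its hypotheses match those of both Theorem \ref{theorem:main} and Theorem \ref{theorem:varianceestimation}. The only subtle point worth flagging is verifying that $\nu^2 > 0$ so that division by $\hat\nu_n$ is well-defined asymptotically, but this follows directly from Assumption \ref{ass:Q}(a) as noted above.
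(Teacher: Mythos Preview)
Your proof is correct and takes essentially the same approach as the paper: the paper states that Assumption~\ref{ass:Q}(a) implies $\nu^2 > 0$ and then presents the corollary as an immediate consequence of Theorems~\ref{theorem:main} and~\ref{theorem:varianceestimation}, without giving a separate formal proof. Your argument spells out the Slutsky/continuous mapping details that the paper leaves implicit, but the substance is identical.
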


Next, we consider the limiting behavior of the usual heteroskedasticity-robust variance estimator obtained from a two-stage least squares regression of $Y_i$ on a constant and $D_i$, using $A_i$ as an instrument, which we denote by $\hat \omega_n^2$. Theorem \ref{theorem:robust} derives the limit in probability of $\hat \omega_n^2$.

\begin{theorem}\label{theorem:robust}
Suppose $Q$ satisfies Assumption \ref{ass:Q} and the treatment assignment mechanism satisfies Assumptions \ref{ass:assignment}--\ref{ass:pairs}.  Then,
\begin{align*}
\hat{\omega}_{n}^2 \cp \omega^2~,
\end{align*}
where 
\[\omega^2 = \frac{1}{P \{C_i = 1\}^2} (\var[Y_i^\ast(1)] + \var[Y_i^\ast(0)])~.\]
In particular, $\omega^2 \ge \nu^2$ and the inequality is strict unless
\[E[Y^*_i(1) + Y^*_i(0)|X_i] = E[Y^*_i(1) + Y^*_i(0)]\]
with probability one under $Q$.
\end{theorem}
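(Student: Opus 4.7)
The plan is to derive the probability limit of $\hat\omega_n^2$ by simplifying the HC0 two-stage least squares variance formula under the matched-pairs design, and then to verify the ordering $\omega^2 \geq \nu^2$ by an application of the law of total variance. Since $A_i \in \{0,1\}$ and $\bar A_n = 1/2$ is deterministic under Assumption \ref{ass:assignment}, a direct computation of the just-identified HC0 IV variance formula reduces it to
\begin{equation*}
\hat\omega_n^2 \;=\; \frac{n^{-1}\sum_{i=1}^{2n}\hat u_i^2}{\bigl(\hat\phi_n(1) - \hat\phi_n(0)\bigr)^2}, \qquad \hat u_i = (Y_i - \hat\Delta_n D_i) - (\bar Y_n - \hat\Delta_n \bar D_n),
\end{equation*}
so it suffices to pin down the limits of the numerator and the denominator separately. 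A matched-pairs law of large numbers of the type used in the proof of Theorem \ref{theorem:main} gives $\hat\phi_n(a) \cp E[D_i(a)]$, and hence by Assumption \ref{ass:Q}(d) the denominator converges in probability to $P\{C_i=1\}^2$.

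For the numerator I would introduce $\tilde Y_i^\ast = Y_i - \Delta(Q) D_i = Y_i^\ast(1)A_i + Y_i^\ast(0)(1-A_i)$. Using $\hat\Delta_n \cp \Delta(Q)$ from Theorem \ref{theorem:main} together with the $L^2$ bounds in Assumption \ref{ass:Q}(b), expanding $\hat u_i^2$ and absorbing $o_P(1)$ terms yields
\begin{equation*}
\frac{1}{n}\sum_{i=1}^{2n}\hat u_i^2 \;=\; 2\Biggl[\frac{1}{2n}\sum_{i=1}^{2n}(\tilde Y_i^\ast)^2 - \Bigl(\overline{\tilde Y^\ast}_n\Bigr)^2\Biggr] + o_P(1).
\end{equation*}
A matched-pairs LLN of the kind established in Lemma S.1.4 of \cite{bai2022mp}, applied with $(\tilde Y_i^\ast)^2$ and $\tilde Y_i^\ast$ in place of the usual outcomes, then gives $(2n)^{-1}\sum_i (\tilde Y_i^\ast)^2 \cp \tfrac12\bigl(E[Y_i^\ast(1)^2] + E[Y_i^\ast(0)^2]\bigr)$ and $\overline{\tilde Y^\ast}_n \cp \tfrac12\bigl(E[Y_i^\ast(1)] + E[Y_i^\ast(0)]\bigr)$. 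The crucial algebraic fact is that monotonicity (Assumption \ref{ass:Q}(d)) and the definition of $\Delta(Q)$ jointly yield $E[\tilde Y_i(1) - \tilde Y_i(0)] = \Delta(Q)P\{C_i=1\} = \Delta(Q) E[D_i(1) - D_i(0)]$, so that $E[Y_i^\ast(1)] = E[Y_i^\ast(0)]$; substituting this collapses the numerator limit to $\var(Y_i^\ast(1)) + \var(Y_i^\ast(0))$, and Slutsky's theorem delivers $\hat\omega_n^2 \cp \omega^2$.

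For the comparison, let $m_a(X_i) = E[Y_i^\ast(a) \mid X_i] - E[Y_i^\ast(a)]$, so that by the law of total variance $\var(Y_i^\ast(a)) = E[\var(Y_i^\ast(a) \mid X_i)] + E[m_a(X_i)^2]$. Substituting into $(\omega^2 - \nu^2)P\{C_i=1\}^2$ cancels the conditional-variance terms, and straightforward simplification gives
\begin{equation*}
(\omega^2 - \nu^2)\,P\{C_i=1\}^2 \;=\; E[m_1(X_i)^2] + E[m_0(X_i)^2] - \tfrac12 E\bigl[(m_1(X_i) - m_0(X_i))^2\bigr] \;=\; \tfrac12 E\bigl[(m_1(X_i) + m_0(X_i))^2\bigr] \;\geq\; 0,
\end{equation*}
with equality iff $m_1(X_i) + m_0(X_i) = 0$ $Q$-almost surely, which is precisely the stated condition $E[Y_i^\ast(1) + Y_i^\ast(0) \mid X_i] = E[Y_i^\ast(1) + Y_i^\ast(0)]$ $Q$-a.s.

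The main technical obstacle is establishing the matched-pairs LLN for $(2n)^{-1}\sum_i (\tilde Y_i^\ast)^2$, since the summands are coupled across $i$ by the within-pair treatment assignment and so are not i.i.d. I would handle this by partitioning the sum over pairs and exploiting that, conditional on $X^{(n)}$ and on the potential outcomes, $A^{(n)}$ is independent across pairs with within-pair marginals equal to $1/2$; combined with the Lipschitz conditions in Assumption \ref{ass:Q}(c) and the pair-closeness condition in Assumption \ref{ass:pairs}, this yields convergence to the population mean in exactly the manner of \cite{bai2022mp}.
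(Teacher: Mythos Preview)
Your proposal is correct and follows essentially the same route as the paper: reduce the just-identified HC0 formula to $\hat\omega_n^2 = \big(n^{-1}\sum_i \hat u_i^2\big)/\big(\hat\phi_n(1)-\hat\phi_n(0)\big)^2$, apply the matched-pairs law of large numbers to the numerator and denominator, and invoke $\hat\Delta_n\cp\Delta(Q)$ together with $E[Y_i^\ast(1)]=E[Y_i^\ast(0)]$ to identify the limit. Your treatment is in fact more explicit than the paper's, which leaves the comparison $\omega^2\geq\nu^2$ to ``additional direct calculations''; your law-of-total-variance argument showing $(\omega^2-\nu^2)P\{C_i=1\}^2=\tfrac12 E[(m_1+m_0)^2]$ is exactly the intended one. (One minor note: the matched-pairs LLN you invoke is Lemma S.1.5 in \cite{bai2022mp}, not S.1.4.)
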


Finally, we consider the heteroskedasticity-robust variance estimator obtained from a two-stage least squares regression with pair fixed effects, which is a specification commonly used in practice: see \cite{groh2016macroinsurance}, and \cite{resnjanskij2021can} for examples. Specifically, let $\hat \alpha_n$ be the two-stage least squares estimator for $\alpha$ in the linear regression
\begin{equation} \label{eq:pfe-noadj}
Y_i = \alpha D_i + \sum_{1 \leq j \leq n} \theta_j I \{i \in \{\pi(2j - 1), \pi(2j)\}\} + \epsilon_i
\end{equation}
using $A_i$ as an instrument for $D_i$. Let $\hat \omega^2_{n,\rm pfe}$ denote the usual heteroskedasticity-robust variance estimator (HC0) for $\hat \alpha_n$ and let $\hat \omega^2_{n,\rm pfe, HC1}$ denote the robust variance estimator with a degree-of-freedom adjustment (HC1). In this particular case,
\begin{equation}\label{eq:pfe-HC1}
\hat \omega^2_{n,\rm pfe, HC1} = \frac{2n}{2n - (n + 1)} \hat \omega^2_{n,\rm pfe}~, 
\end{equation}
because the number of regressors in \eqref{eq:pfe-noadj} is $n + 1$.

\begin{theorem} \label{thm:hc1-fe}
Suppose $Q$ satisfies Assumption \ref{ass:Q} and the treatment assignment mechanism satisfies Assumptions \ref{ass:assignment}--\ref{ass:pairs}. Then,
\begin{align*}
\hat \omega^2_{n, \rm pfe} & \stackrel{P}{\to} \frac{1}{2} \omega_{\rm pfe, HC1}^2  \\
\hat \omega^2_{n, \rm pfe, HC1}  & \stackrel{P}{\to} \omega_{\rm pfe, HC1}^2~,
\end{align*}
where
\[ \omega_{ \rm pfe, HC1}^2 = \frac{1}{P \{C_i = 1\}^2} (E[\var[Y_i^\ast(1) | X_i]] + E[\var[Y_i^\ast(0) | X_i]] + E[(E[Y^\ast_i(1)|X_i] - E[Y^\ast_i(0)|X_i])^2])~. \]
In particular, $\omega_{ \rm pfe, HC1}^2 \ge \nu^2$ and the inequality is strict unless \[E[Y^\ast_i(1) - Y^\ast_i(0)|X_i] = 0\] with probability one under $Q$.
\end{theorem}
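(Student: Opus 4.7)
The plan is to use Frisch--Waugh--Lovell (FWL) to reduce the pair fixed-effects 2SLS to a just-identified IV regression on within-pair demeaned quantities, evaluate the HC0 sandwich in closed form, and show that (up to a factor of $1/2$) it equals $\hat\tau_n^2/(\hat\phi_n(1)-\hat\phi_n(0))^2$, whose limit is already established in the course of proving Theorem \ref{theorem:varianceestimation}.

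Because each pair contains exactly one treated and one control unit, within-pair demeaning gives $\tilde A_{\pi(2j)}=-\tilde A_{\pi(2j-1)}\in\{-\tfrac12,\tfrac12\}$ and $\tilde Y_{\pi(2j)}=-\tilde Y_{\pi(2j-1)}=-(Y_{\pi(2j-1)}-Y_{\pi(2j)})/2$, with analogous identities for $\tilde D$. FWL applied to \eqref{eq:pfe-noadj} with $A_i$ instrumenting $D_i$ then yields $\hat\alpha_n = \tilde A'\tilde Y/\tilde A'\tilde D = \hat\Delta_n$, so the 2SLS point estimate coincides with the Wald estimator. Solving the first-order conditions for the pair fixed effects given $\hat\alpha_n=\hat\Delta_n$ gives $\hat\theta_j=(\hat Y_{\pi(2j-1)}+\hat Y_{\pi(2j)})/2$ with $\hat Y_i$ as in \eqref{eq:transformed}, and hence structural residuals $\hat\epsilon_{\pi(2j-1)}=-\hat\epsilon_{\pi(2j)}=(\hat Y_{\pi(2j-1)}-\hat Y_{\pi(2j)})/2$. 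Plugging these into the FWL form of the HC0 sandwich, $n\hat V_{\rm HC0}(\hat\alpha_n) = n(\tilde A'\tilde D)^{-2}\sum_i \tilde A_i^2 \hat\epsilon_i^2$, and using $\tilde A_i^2=1/4$, $\sum_i \hat\epsilon_i^2 = (n/2)\hat\tau_n^2$, and $\tilde A'\tilde D = (n/2)(\hat\phi_n(1)-\hat\phi_n(0))$ delivers the closed form
\[ \hat\omega^2_{n,\rm pfe} = \frac{\hat\tau_n^2}{2(\hat\phi_n(1)-\hat\phi_n(0))^2}~. \]

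Writing $\hat Y_{\pi(2j)}-\hat Y_{\pi(2j-1)} = (Y^\ast_{\pi(2j)}(A_{\pi(2j)})-Y^\ast_{\pi(2j-1)}(A_{\pi(2j-1)})) + (\Delta(Q)-\hat\Delta_n)(D_{\pi(2j)}-D_{\pi(2j-1)})$ and using $\hat\Delta_n-\Delta(Q)=O_P(n^{-1/2})$ from Theorem \ref{theorem:main}, the convergence arguments of \cite{bai2022mp} (applied to the transformed potential outcomes $Y^\ast_i(a)$, which inherit the analogues of Assumption \ref{ass:Q}(b)--(c), and as mirrored in the proof of Theorem \ref{theorem:varianceestimation}) give
\[ \hat\tau_n^2 \cp \tau^2 := E[\var[Y_i^\ast(1)|X_i]] + E[\var[Y_i^\ast(0)|X_i]] + E[(E[Y_i^\ast(1)|X_i]-E[Y_i^\ast(0)|X_i])^2]~. \]
Combined with $\hat\phi_n(1)-\hat\phi_n(0)\cp P\{C_i=1\}$ (from the proof of Theorem \ref{theorem:main}), this establishes $\hat\omega^2_{n,\rm pfe}\cp \tfrac12\omega^2_{\rm pfe,HC1}$; the HC1 statement then follows from \eqref{eq:pfe-HC1} and $2n/(n-1)\to 2$.

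For the inequality, let $m_a(X_i)=E[Y_i^\ast(a)|X_i]$ and $\mu_a=E[Y_i^\ast(a)]$. Direct expansion of the square in $\nu^2$ and substitution into $\omega^2_{\rm pfe,HC1}$ gives
\[ \omega^2_{\rm pfe,HC1}-\nu^2 = \frac{1}{P\{C_i=1\}^2}\left(\tfrac12 \var[m_1(X_i)-m_0(X_i)] + (\mu_1-\mu_0)^2\right)\ge 0~, \]
with equality if and only if $m_1(X_i)=m_0(X_i)$ almost surely under $Q$, i.e., $E[Y^\ast_i(1)-Y^\ast_i(0)|X_i]=0$ almost surely. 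The main obstacle is the convergence $\hat\tau_n^2\cp\tau^2$: one must check that the \cite{bai2022mp} convergence arguments transfer cleanly to $Y^\ast_i(a)$ and that the plug-in error $\hat\Delta_n-\Delta(Q)$ contributes only vanishing cross- and quadratic terms, a step already carried out in the proof of Theorem \ref{theorem:varianceestimation}.
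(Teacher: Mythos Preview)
Your proof is correct and follows essentially the same route as the paper: both use Frisch--Waugh--Lovell on the pair fixed-effects 2SLS to obtain $\hat\alpha_n=\hat\Delta_n$, solve for the fixed effects to get residuals $\hat\epsilon_{\pi(2j-1)}=-\hat\epsilon_{\pi(2j)}=(\hat Y_{\pi(2j-1)}-\hat Y_{\pi(2j)})/2$, and reduce the HC0 sandwich to the closed form $\hat\omega^2_{n,\rm pfe}=\tfrac12\hat\tau_n^2/(\hat\phi_n(1)-\hat\phi_n(0))^2$, after which the limits follow from the proof of Theorem~\ref{theorem:varianceestimation}. One small addition worth noting: you explicitly verify the inequality $\omega^2_{\rm pfe,HC1}\ge\nu^2$ and its equality condition, whereas the paper's proof leaves this implicit; your computation is correct (and since $\mu_1-\mu_0=E[Y_i^\ast(1)]-E[Y_i^\ast(0)]=0$ by \eqref{eq:Ystar1-Ystar0}, the $(\mu_1-\mu_0)^2$ term in your difference vanishes automatically, so the equality condition reduces exactly to $E[Y^\ast_i(1)-Y^\ast_i(0)|X_i]=0$ a.s.).
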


From Theorems \ref{theorem:robust} and \ref{thm:hc1-fe} we obtain that neither cluster-robust standard error is consistent for $\nu^2$ unless the baseline covariates are irrelevant for the transformed potential outcomes $Y_i^*(a)$ in an appropriate sense. In Section \ref{sec:sims}, we illustrate that this conservativeness translates into a lack of power relative to tests constructed using our consistent variance estimator $\hat{\nu}_n^2$.

\begin{remark}
Note that Theorem \ref{thm:hc1-fe} further reveals that the limit in probability of $\hat \omega^2_{n, \rm pfe}$ may in general be strictly smaller than $\nu^2$, and therefore tests constructed using this estimator are not guaranteed to control size in general. See \cite{bai2023inference} for similar findings in the case of perfect compliance.
\end{remark}

\section{Covariate Adjustment} \label{sec:adjustment}
In this section, we consider a generalization of the estimator $\hat \Delta_n$ defined in Section \ref{subsec:MPIV} that allows for covariate adjustment using the additional, observed, baseline covariates $W^{(n)}$ that were not used when forming pairs. In Section \ref{sec:gen_adjust}, we derive general results on covariate adjustment using arbitrarily-specified working models of the conditional expectations of the outcome and treatment take-up with respect to the covariates. In Section \ref{sec:linear_adjust}, we show how a careful application of these earlier results using an optimal linear covariate-adjusted estimator can ensure an improvement over the unadjusted Wald estimator $\hat{\Delta}_n$ in terms of precision. 

\subsection{General Covariate Adjustments}\label{sec:gen_adjust}
Following \cite{bai2023covariate}, note that it can be shown under Assumption \ref{ass:assignment} that for any $a \in \{0, 1 \}$, $m_{a, \tilde Y}: \mathbb{R}^{k_x} \times \mathbb{R}^{k_w} \to \mathbb{R}$, and $m_{a, D}: \mathbb{R}^{k_x} \times \mathbb{R}^{k_w} \to \mathbb{R}$ such that $E[|m_{a, \tilde Y}(X_i, W_i)|] < \infty$, $E[|m_{a, D}(X_i, W_i)|] < \infty$,
\begin{equation}\label{eq:moment-tildeY}
E[ 2 I \{ A_i = a \} ( Y_i - m_{a, \tilde Y}( X_i, W_i ) ) + m_{a, \tilde Y}( X_i, W_i ) ] = E[ \tilde Y_i(a) ],
\end{equation}
\begin{equation}\label{eq:moment-D}
E\left[ 2 I \{ A_i = a \} ( D_i - m_{a, D}( X_i, W_i) ) + m_{a, D}( X_i, W_i ) \right] = E\left[ D_i(a) \right].
\end{equation}
Here, we view $m_{a, D}$ and $m_{a, \tilde Y}$ as ``working models'' of $E[\tilde{Y}_i(a)|X_i,W_i]$ and $E[D_i(a)|X_i,W_i]$, respectively, but we emphasize that \eqref{eq:moment-tildeY}--\eqref{eq:moment-D} hold even if these models are incorrectly specified.
These two moment conditions motivate a covariate-adjusted estimator defined as
\[ \hat \Delta_n^{\rm adj} = \frac{\hat \psi_n^{\rm adj}(1) - \hat \psi_n^{\rm adj}(0)}{\hat \phi_n^{\rm adj}(1) - \hat \phi_n^{\rm adj}(0)}~, \]
where
\begin{align*}
\hat \psi_n^{\rm adj}(a) & = \frac{1}{2n} \sum_{1 \leq i \leq 2n} (2 I \{A_i = a\} (Y_i - \hat m_{a, \tilde Y}(X_i, W_i)) + \hat m_{a, \tilde Y}(X_i, W_i))~, \\
\hat \phi_n^{\rm adj}(a) & = \frac{1}{2n} \sum_{1 \leq i \leq 2n} (2 I \{A_i = a\} (D_i - \hat m_{a, D}(X_i, W_i)) + \hat m_{a, D}(X_i, W_i))~,
\end{align*}
and $\hat m_{a, \tilde Y}$ and $\hat m_{a, D}$ are suitable estimators of $m_{a, \tilde Y}$ and $m_{a, D}$. Note that if we set $\hat m_{a,\tilde{Y}} = \hat m_{a, D} = 0$, then $\hat{\Delta}_n^{\rm adj}$ simplifies to $\hat{\Delta}_n$. Let \[m_{a, \tilde Y D}(X_i,W_i) = m_{a, \tilde Y}(X_i,W_i) - \Delta(Q) m_{a, D}(X_i,W_i)~.\] Theorem \ref{theorem:adjustment} below establishes the limiting distribution of $\hat{\Delta}_n^{\rm adj}$ for a matched-pairs design under the following high-level assumption on the working models:

\begin{assumption}\label{ass:m}
The functions $m_{a, \tilde Y D}$ for $a \in \{0, 1 \}$ satisfy \vspace{-.25cm}
\begin{enumerate}[\rm (a)]

        \item $E[m_{a,\tilde Y D}(X_i,W_i)^2] < \infty$ for $a \in \{0, 1\}$.

        \item $E[m_{a, \tilde Y D}(X_i, W_i) | X_i = x]$, $E[(m_{a, \tilde Y D}(X_i, W_i))^2 | X_i = x]$, and $E[m_{a, \tilde Y D}(X_i, W_i) Y_i^\ast(a) | X_i = x]$ for $a \in \{0, 1 \}$, and $E[m_{1, \tilde Y D}(X_i, W_i) m_{0, \tilde Y D}(X_i, W_i) | X_i = x]$ are Lipschitz.
    \end{enumerate}
    \end{assumption}

    \begin{theorem}\label{theorem:adjustment}
    Suppose $Q$ satisfies Assumption \ref{ass:Q}, the treatment assignment mechanism satisfies Assumptions \ref{ass:assignment}--\ref{ass:pairs}, and $m_{a, \tilde Y}$ and $m_{a, D}$ for $a\in \{0, 1 \}$ satisfy Assumption \ref{ass:m}. 
    Further suppose $\hat m_{a, \tilde Y}$ and $\hat m_{a, D}$ satisfy that 
    \begin{equation}\label{eq:m-Y-hat-close}
    \frac{1}{\sqrt{2n}} \sum_{1\leq i\leq 2n} (2A_i - 1) (\hat m_{a, \tilde Y}(X_i, W_i) - m_{a, \tilde Y}(X_i, W_i)) \cp 0~,
    \end{equation}
    \begin{equation}\label{eq:m-D-close}
    \frac{1}{\sqrt{2n}} \sum_{1\leq i\leq 2n} (2A_i - 1) (\hat m_{a, D}(X_i, W_i) - m_{a, D}(X_i, W_i)) \cp 0~.
    \end{equation}
    Then, $\hat \Delta_n^{\rm adj}$ satisfies
    \begin{align*}
    \sqrt{n} \left( \hat \Delta_n^{\rm adj}
    -
    \Delta(Q)
    \right) \cd 
    N\left(0, \nu_{\rm adj}^2  \right),
    \end{align*}
    where $\nu_{\rm adj}^2 = \frac{1}{P \{C_i = 1\}^2} ( \nu_{1, \rm adj}^2 + \nu_{2, \rm adj}^2 + \nu_{3, \rm adj}^2 )$ with
    \begin{align*}
    \nu_{1, \rm adj}^2 
    &= \frac{1}{2} E[ \var[ E\left[ Y_i^\ast(1) + Y_i^\ast(0) | X_i, W_i \right] 
    - ( m_{1, \tilde Y D}(X_i, W_i) + m_{0, \tilde Y D}(X_i, W_i) ) | X_i ] ]~, \\
    \nu_{2, \rm adj}^2 
    &= \frac{1}{2} \var[ E[ Y^\ast_i(1) - Y^\ast_i(0) | X_i, W_i] ]~, \\
    \nu_{3, \rm adj}^2
    &= E[ \var[ Y^\ast_i(1) | X_i, W_i ] + \var[ Y^\ast_i(0) | X_i, W_i ] ]~.
    \end{align*}
    \end{theorem}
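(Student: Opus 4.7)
The plan is to reduce $\sqrt{n}(\hat \Delta_n^{\rm adj} - \Delta(Q))$ to the covariate-adjusted difference-in-means estimator of \cite{bai2023covariate}, applied to the transformed potential outcomes $Y_i^\ast(a) = \tilde Y_i(a) - \Delta(Q) D_i(a)$ together with the combined working models $m_{a, \tilde Y D} = m_{a, \tilde Y} - \Delta(Q) m_{a, D}$ that already appear in the statement. This is the natural extension to covariate adjustment of the reduction used to prove Theorem \ref{theorem:main}, and it means the theorem can be obtained by reproducing the arguments for the analogous result in \cite{bai2023covariate}, but with $Y_i^\ast(a)$ and $m_{a, \tilde Y D}$ in place of the usual outcomes and working models.

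First I would write
\[
\sqrt{n}(\hat \Delta_n^{\rm adj} - \Delta(Q)) = \frac{\sqrt{n}\bigl[(\hat \psi_n^{\rm adj}(1) - \hat \psi_n^{\rm adj}(0)) - \Delta(Q)(\hat \phi_n^{\rm adj}(1) - \hat \phi_n^{\rm adj}(0))\bigr]}{\hat \phi_n^{\rm adj}(1) - \hat \phi_n^{\rm adj}(0)}~.
\]
The key algebraic observation is that, since $Y_i - \Delta(Q) D_i = Y_i^\ast(a)$ whenever $A_i = a$, the numerator can be rewritten as $\sqrt{n}(\hat \xi_n(1) - \hat \xi_n(0))$, where
\[
\hat \xi_n(a) = \frac{1}{2n}\sum_{1 \leq i \leq 2n} \bigl[ 2 I\{A_i = a\}(Y_i^\ast(a) - \hat m_{a, \tilde Y D}(X_i, W_i)) + \hat m_{a, \tilde Y D}(X_i, W_i)\bigr]~,
\]
with $\hat m_{a, \tilde Y D} = \hat m_{a, \tilde Y} - \Delta(Q)\hat m_{a, D}$.

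Using conditions \eqref{eq:m-Y-hat-close} and \eqref{eq:m-D-close}, I would then replace $\hat m_{a, \tilde Y D}$ by its population counterpart $m_{a, \tilde Y D}$ at cost $o_P(1)$. The resulting expression is precisely the matched-pairs covariate-adjusted difference-in-means statistic of \cite{bai2023covariate} applied to the pseudo-outcomes $Y_i^\ast(a)$ with working models $m_{a, \tilde Y D}$; moreover, $E[Y_i^\ast(1) - Y_i^\ast(0)] = 0$ follows from the Wald identity $\Delta(Q) \cdot P\{C_i = 1\} = E[\tilde Y_i(1) - \tilde Y_i(0)]$, which in turn follows from Assumption \ref{ass:Q}(d). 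Reproducing the proof of that limit theorem delivers
\[
\sqrt{n}(\hat \xi_n(1) - \hat \xi_n(0)) \cd N(0, \nu_{1, \rm adj}^2 + \nu_{2, \rm adj}^2 + \nu_{3, \rm adj}^2)~,
\]
where the three-term decomposition arises from splitting the statistic, as in \cite{bai2022mp}, into a between-pair component and a within-pair component and identifying the limits of the corresponding pair-level squared differences. For the denominator, combining the moment identity \eqref{eq:moment-D}, the weak law for matched-pair averages (as in Lemma S.1.4 of \cite{bai2022mp}), and condition \eqref{eq:m-D-close} yields $\hat \phi_n^{\rm adj}(1) - \hat \phi_n^{\rm adj}(0) \cp P\{C_i = 1\}$, which is strictly positive by Assumption \ref{ass:Q}(e). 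Slutsky's theorem then delivers the announced distribution with $\nu_{\rm adj}^2$ as stated.

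The main obstacle is verifying that Assumption \ref{ass:m}, phrased in terms of the combined model $m_{a, \tilde Y D}$, is enough to power the limit theorem for the covariate-adjusted matched-pairs statistic. In identifying the limits of the pair-level squared differences that produce $\nu_{1, \rm adj}^2$ and $\nu_{2, \rm adj}^2$, one needs Lipschitz continuity of several conditional moments in $x$; Assumption \ref{ass:m}(b) is tailored to supply exactly these, namely Lipschitz regularity of $E[m_{a, \tilde Y D}|X_i = x]$, $E[m_{a, \tilde Y D}^2|X_i = x]$, $E[m_{a, \tilde Y D} Y_i^\ast(a)|X_i = x]$, and $E[m_{1, \tilde Y D} m_{0, \tilde Y D}|X_i = x]$. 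Combined with the Lipschitz regularity in Assumption \ref{ass:Q}(c), the square-integrability in Assumption \ref{ass:m}(a), and the pairing quality in Assumption \ref{ass:pairs}, these are the ingredients required to conclude that only within-pair variation contributes asymptotically and to pin down the three-term decomposition of $\nu_{\rm adj}^2$.
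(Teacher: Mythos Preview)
Your proposal is correct and follows essentially the same approach as the paper: both express $\sqrt{n}(\hat \Delta_n^{\rm adj} - \Delta(Q))$ as a ratio, rewrite the numerator as the covariate-adjusted matched-pairs statistic of \cite{bai2023covariate} applied to $Y_i^\ast(a)$ with combined working models $m_{a,\tilde Y D}$, replace $\hat m_{a,\tilde Y D}$ by $m_{a,\tilde Y D}$ via \eqref{eq:m-Y-hat-close}--\eqref{eq:m-D-close}, invoke $E[Y_i^\ast(1) - Y_i^\ast(0)] = 0$, and then apply the within/between-pair decomposition and Slutsky. The paper spells out the $A_n^{\rm adj}, B_n^{\rm adj}, C_n^{\rm adj}, D_n^{\rm adj}$ split explicitly and handles the denominator by citing Theorem~3.1 of \cite{bai2023covariate} rather than the weak law you cite, but these are cosmetic differences.
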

       In Section \ref{sec:linear_adjust} we provide low-level sufficient conditions for Assumptions \ref{ass:m} and \eqref{eq:m-Y-hat-close}-\eqref{eq:m-D-close} for the special case in which $m_{a,\tilde{Y}}$ and $m_{a,D}$ correspond to linear working models that are optimal in the sense of minimizing the limiting variance of $\hat \Delta_n^{\rm adj}$ among all linear working models. We derive Theorem \ref{theorem:adjustment} by reproducing arguments in Theorem 3.1 in \cite{bai2023covariate}, but replacing the working model $m_a$ with the transformed working model $m_{a, \tilde YD}$. As a result, the then numerator $\nu_{\rm adj}^2$  corresponds exactly to the limiting variance obtained in \cite{bai2023covariate} with the working models in \cite{bai2023covariate} replaced with $m_{a, \tilde YD}$. As expected, $\nu^2_{\rm adj} = \nu^2$ when the working models are set to zero. In general, $\nu^2_{\rm adj}$ is not guaranteed to be weakly smaller than $\nu^2$ for all choices of working models, but we note that $\nu_{\rm adj}^2$ is minimized (and thus smaller than $\nu^2$) when $\nu_{1, \rm adj}^2 = 0$, i.e., when the working models satisfy
    \begin{multline*}
    E[Y_i^\ast(1) + Y_i^\ast(0) | X_i, W_i] - E[Y_i^\ast(1) + Y_i^\ast(0) | X_i] \\
    = m_{1, \tilde Y D}(X_i, W_i) + m_{0, \tilde Y D}(X_i, W_i) - E[m_{1, \tilde Y D}(X_i, W_i) + m_{0, \tilde Y D}(X_i, W_i) | X_i]
    \end{multline*}
    with probability one. This property holds, in particular, when $m_{a,\tilde{Y}}$ and $m_{a,D}$ are correctly specified. Moreover, under correct specification, it can be shown that $\nu^2_{\rm adj}$ coincides with the efficiency bound derived in \cite{bai2023efficient}.
    

    Next, we construct a consistent variance estimator for the limiting variance $\nu_{\rm adj}^2$. As noted in the discussion following Theorem \ref{theorem:adjustment}, the numerator for $\nu_{\rm adj}^2$ corresponds exactly to the limiting variance obtained in \cite{bai2023covariate} with the usual potential outcomes $Y_i(a)$ replaced with the transformed outcomes $Y_i^\ast(a)$ and working models in  \cite{bai2023covariate} replaced with $m_{a, \tilde YD}$. We thus follow the variance construction from \cite{bai2023covariate}, but with a feasible version of $Y_i^\ast(a)$ defined as
    \[\hat{Y}_i = Y_i - \hat{\Delta}_nD_i~,\]
    and a feasible version of $m_{a, \tilde{Y}D}(X_i,W_i) = m_{a, \tilde Y}(X_i, W_i) - \Delta(Q) m_{a, D}(X_i, W_i)$ defined as 
    \[
    \hat m_{a, \tilde Y D}(X_i, W_i) = \hat m_{a, \tilde Y}(X_i, W_i) - \hat{\Delta}_n \hat m_{a, D}(X_i, W_i)~.
    \]
    This strategy leads to the following variance estimator:
    \begin{equation} \label{eq:adj-varest}
    \hat \nu_{n, \rm adj}^2 = \frac{\hat \tau_{n, \rm adj}^2 - \frac{1}{2} ( \hat \lambda_{n, \rm adj} +  \hat \Gamma_{n, \rm adj}^2 )}{(\hat \phi_n^{\rm adj}(1) - \hat \phi_n^{\rm adj}(0))^2}~,
    \end{equation}
    where
    \begin{align*}
    \hat \tau_{n, \rm adj}^{2} &= \frac{1}{n} \sum_{1 \leq j \leq n} ( \hat Y_{\pi(2j-1), \rm adj} - \hat Y_{\pi(2j), \rm adj} )^2~, \\
    \hat \lambda_{n, \rm adj} &= \frac{2}{n} \sum_{1 \leq j \leq \lfloor \frac{n}{2} \rfloor} ( \hat Y_{\pi(4j-3), \rm adj} - \hat Y_{\pi(4j-2), \rm adj} ) \\
    & \hspace{3em} \times ( \hat Y_{\pi(4j-1), \rm adj} - \hat Y_{\pi(4j), \rm adj} ) ( A_{\pi(4j-3)} - A_{\pi(4j-2)} ) ( A_{\pi(4j-1)} - A_{\pi(4j)} )~, \\
    \hat \Gamma_{n, \rm adj} &= \frac{1}{n}\sum_{1 \le i \le n: A_i = 1} \hat Y_{i, \rm adj} - \frac{1}{n}\sum_{1 \le i \le n: A_i = 0} \hat Y_{i, \rm adj}~, \\
    \hat Y_{i, \rm adj} &= \hat Y_i - \frac{1}{2} ( \hat m_{1, \tilde Y D}(X_i, W_i) + \hat m_{0, \tilde Y D}(X_i, W_i)  )~.
    \end{align*}
    The following theorem establishes the consistency of $\hat{\nu}_{n, \rm adj}^2$ for $ \nu_{\rm adj}^2$:
    \begin{theorem}\label{theorem:varianceestimation-adj}
    Suppose $Q$ satisfies Assumption \ref{ass:Q}, the treatment assignment mechanism satisfies Assumptions \ref{ass:assignment}--\ref{ass:pairsofpairs}, and $m_{a, \tilde Y}$ and $m_{a, D}$ for $a\in \{0, 1 \}$ satisfy Assumption \ref{ass:m}. Further suppose $\hat m_{a, \tilde Y}$ and $\hat m_{a, D}$ satisfy (\ref{eq:m-Y-hat-close})--(\ref{eq:m-D-close}) and 
    \begin{equation}\label{eq:m-Y-hat-close-adj}
    \frac{1}{2n} \sum_{1\leq i\leq 2n} ( \hat m_{a, \tilde Y}(X_i, W_i) - m_{a, \tilde Y}(X_i, W_i) )^2 \cp 0~,
    \end{equation}
    \begin{equation}\label{eq:m-D-close-adj}
    \frac{1}{2n} \sum_{1\leq i\leq 2n} ( \hat m_{a, D}(X_i, W_i) - m_{a, D}(X_i, W_i) )^2 \cp 0~.
    \end{equation}
    Then,
    \begin{align*}
    \hat{\nu}_{n, \rm adj}^2 \cp \nu_{\rm adj}^2~.
    \end{align*}
    \end{theorem}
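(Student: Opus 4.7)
The plan is to mirror the proof of Theorem \ref{theorem:varianceestimation}, treating $\hat \nu_{n, \rm adj}^2$ as the analog of $\hat \nu_n^2$ in which the transformed outcome $\hat Y_i$ has been replaced by $\hat Y_{i, \rm adj}$ and $\hat \phi_n(a)$ has been replaced by $\hat \phi_n^{\rm adj}(a)$. Accordingly, introduce the infeasible ``ideal'' counterpart
\begin{equation*}
Y^\ast_{i,\rm adj} = Y_i^\ast(A_i) - \tfrac{1}{2}\bigl(m_{1,\tilde Y D}(X_i,W_i) + m_{0,\tilde Y D}(X_i,W_i)\bigr),
\end{equation*}
where $Y_i^\ast(A_i) = Y_i - \Delta(Q) D_i$, and denote by $\hat \tau_{n, \rm adj}^{\ast 2}, \hat \lambda^\ast_{n,\rm adj}, \hat \Gamma^\ast_{n,\rm adj}$ the quantities obtained from $\hat \tau_{n, \rm adj}^2, \hat \lambda_{n, \rm adj}, \hat \Gamma_{n, \rm adj}$ by replacing each $\hat Y_{i,\rm adj}$ with $Y^\ast_{i,\rm adj}$.

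For the denominator, the argument used in the proof of Theorem \ref{theorem:adjustment} (adding and subtracting $m_{a,D}$ to $\hat m_{a,D}$, applying a weak law using (\ref{eq:m-D-close-adj}) and the moment identity (\ref{eq:moment-D})) already yields $\hat \phi_n^{\rm adj}(a) \cp E[D_i(a)]$, so the denominator converges to $P\{C_i=1\}^2 > 0$ by Assumption \ref{ass:Q}(d)--(e). For the numerator I would establish: (i) $\hat \tau_{n, \rm adj}^{\ast 2} - \tfrac{1}{2}(\hat \lambda^\ast_{n,\rm adj} + (\hat \Gamma^\ast_{n,\rm adj})^2) \cp P\{C_i=1\}^2 \cdot \nu^2_{\rm adj}$, and (ii) the differences $\hat \tau_{n,\rm adj}^2 - \hat \tau_{n, \rm adj}^{\ast 2}$, $\hat \lambda_{n,\rm adj} - \hat \lambda^\ast_{n,\rm adj}$, and $\hat \Gamma_{n,\rm adj}^2 - (\hat \Gamma^\ast_{n,\rm adj})^2$ are each $o_P(1)$.

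Claim (i) follows by directly reproducing the arguments used in the consistency proof of the covariate-adjusted variance estimator in \cite{bai2023covariate}, but with the potential outcome $Y_i(a)$ replaced by $Y_i^\ast(a)$ and the working models there replaced by $m_{a,\tilde Y D}$. Assumption \ref{ass:Q}(a)--(c) and Assumption \ref{ass:m} respectively supply the moment and Lipschitz-smoothness properties of the transformed outcomes and the transformed working models required by those arguments, while Assumptions \ref{ass:pairs}--\ref{ass:pairsofpairs} supply the pair and pair-of-pairs closeness conditions. For claim (ii), a key identity is
\begin{equation*}
\hat Y_{i,\rm adj} - Y_{i,\rm adj}^\ast = -(\hat \Delta_n - \Delta(Q)) D_i - \tfrac{1}{2}\sum_{a\in\{0,1\}} \bigl[(\hat m_{a,\tilde Y} - m_{a,\tilde Y})(X_i,W_i) - \hat \Delta_n (\hat m_{a,D} - m_{a,D})(X_i,W_i) - (\hat \Delta_n - \Delta(Q)) m_{a,D}(X_i,W_i)\bigr],
\end{equation*}
which, combined with $\hat \Delta_n \cp \Delta(Q)$ from Theorem \ref{theorem:main}, the $L^2$-conditions (\ref{eq:m-Y-hat-close-adj})--(\ref{eq:m-D-close-adj}), and the moment bounds from Assumptions \ref{ass:Q}(b) and \ref{ass:m}(a), controls the average squared size of the perturbation.

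The main obstacle will be claim (ii) for $\hat \tau_{n,\rm adj}^2$ and $\hat \lambda_{n,\rm adj}$. For $\hat \tau_{n,\rm adj}^2 - \hat \tau_{n,\rm adj}^{\ast 2}$, the identity $a^2 - b^2 = (a-b)(a+b)$ together with Cauchy--Schwarz reduces the task to bounding the average squared pair-difference of the perturbation $\hat Y_{i,\rm adj} - Y^\ast_{i,\rm adj}$ against the average squared pair-difference of $Y^\ast_{i,\rm adj}$; the former is $o_P(1)$ by the argument above, while the latter is $O_P(1)$ by the moment assumptions. A similar but combinatorially heavier bookkeeping handles $\hat \lambda_{n,\rm adj}$, whose four-fold products across pairs of pairs can be expanded and bounded by iterated Cauchy--Schwarz, using the uniform boundedness of the assignment indicators $A_{\pi(\cdot)}$. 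The term $\hat \Gamma_{n,\rm adj}^2$ is the simplest, reducing directly to a sample-mean argument.
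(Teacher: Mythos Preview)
Your proposal is correct and follows the same overall strategy as the paper: pass to an infeasible intermediate, invoke the consistency result for the covariate-adjusted variance estimator in \cite{bai2023covariate} on that intermediate, and then close the feasible--infeasible gap. The one difference is in the choice of intermediate. The paper's infeasible version $\hat Y^\ast_{i,\rm adj} = Y_i^\ast - \tfrac{1}{2}(\hat m^\ast_{1,\tilde YD}(X_i,W_i) + \hat m^\ast_{0,\tilde YD}(X_i,W_i))$, with $\hat m^\ast_{a,\tilde YD} = \hat m_{a,\tilde Y} - \Delta(Q)\hat m_{a,D}$, retains the \emph{estimated} working models and only replaces $\hat\Delta_n$ by $\Delta(Q)$. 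This lets Theorem~3.2 of \cite{bai2023covariate}---which already handles the passage from $\hat m$ to $m$ via conditions of the form \eqref{eq:m-Y-hat-close-adj}--\eqref{eq:m-D-close-adj}---apply directly to the infeasible numerator, so that the remaining gap-closing step is literally the $\hat\Delta_n\to\Delta(Q)$ argument from the proof of Theorem~\ref{theorem:varianceestimation}. Your intermediate instead also replaces $\hat m$ by $m$, which shifts the $\hat m\to m$ work into your claim~(ii); this is still valid via your displayed identity and \eqref{eq:m-Y-hat-close-adj}--\eqref{eq:m-D-close-adj}, but it means you must carry the extra perturbation terms $(\hat m_{a,\tilde Y}-m_{a,\tilde Y})$ and $(\hat m_{a,D}-m_{a,D})$ through the Cauchy--Schwarz bookkeeping for $\hat\tau^2_{n,\rm adj}$ and $\hat\lambda_{n,\rm adj}$, whereas the paper avoids this entirely.
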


    \subsection{Optimal Linear Adjustments}\label{sec:linear_adjust}
    We now consider a simple setting of practical interest which specializes Theorem \ref{theorem:adjustment} to the case in which $m_{a,\tilde{Y}}$ and $m_{a,D}$ are the optimal linear working models in a sense to be made formal below.
    To that end, let $\zeta_i = \zeta(X_i, W_i)$ be a user-specified transformation of the baseline characteristics $(X_i, W_i)$ given by some function $\zeta: \mathbb R^{k_x} \times \mathbb R^{k_w} \to \mathbb R^p$. Let $\hat m_{a, \tilde Y}(X_i, W_i) = \zeta_i' \hat \beta_n^Y$ and $\hat m_{a, D}(X_i, W_i) = \zeta_i' \hat \beta_n^D$ for $a \in \{0, 1\}$, where $\hat \beta_n^Y$ and $\hat \beta_n^D$ are the ordinary least squares estimators of $\beta^Y$ and $\beta^D$ in the following two linear regressions with pair fixed effects:
    \begin{align}
    \label{eq:pfe-Y} Y_i & = \alpha^Y A_i + \zeta_i' \beta^Y + \sum_{1 \leq j \leq n} \theta_j^Y I \{i \in \{\pi(2j - 1), \pi(2j)\}\} + \epsilon_i^Y \\
    \label{eq:pfe-D} D_i & = \alpha^D A_i + \zeta_i' \beta^D + \sum_{1 \leq j \leq n} \theta_j^D I \{i \in \{\pi(2j - 1), \pi(2j)\}\} + \epsilon_i^D~.
    \end{align}
    Lemma \ref{lem:FE_TSLS} in the appendix shows that the adjusted estimator $\hat{\Delta}_n^{\rm adj}$ obtained from this choice of $\hat m_{a, \tilde Y}$ and  $\hat m_{a, D}$ could alternatively be computed as the regression coefficient $\hat{\alpha}_n^{\rm IV}$ in the following two-stage least squares regression with $A_i$ as an instrument for $D_i$:
    \begin{equation} \label{eq:pfe-iv}
    Y_i = \alpha D_i + \zeta_i' \beta + \sum_{1 \leq j \leq n} \theta_j I \{i \in \{\pi(2j - 1), \pi(2j)\}\} + \epsilon_i~.
    \end{equation}
    Versions of this estimator have been used in, for instance, \cite{groh2016macroinsurance} and \cite{resnjanskij2021can}. It is also a natural counterpart for the OLS estimator with pair fixed effects which is widely used in settings with perfect compliance; see, for instance, \cite{glennerster2014running}.
    In order to analyze the large-sample behavior of this estimator, we introduce the following assumption:
    \begin{assumption} \label{ass:adjustment}
    The function $\zeta(\cdot)$ is such that
    \begin{enumerate}[\rm (a)]
    \item No component of $\zeta_i$ is a constant and $E[\var[\zeta_i | X_i]]$ is nonsingular.
    \item $\var[\zeta_i] < \infty$.
    \item $E[\zeta_i | X_i = x]$, $E[\zeta_i \zeta_i' | X_i = x]$, $E[\zeta_i \tilde Y_i(a) | X_i = x]$, $E[\zeta_i D_i(a) | X_i = x]$ are Lipschitz.
\end{enumerate}
\end{assumption}
Theorem \ref{thm:zeta} shows that Assumption \ref{ass:adjustment} provides low-level sufficient conditions for Assumption \ref{ass:m} and \eqref{eq:m-Y-hat-close}--\eqref{eq:m-D-close}, and further verifies the optimality of these working models among all linear working models:

\begin{theorem} \label{thm:zeta}
Suppose $Q$ satisfies Assumption \ref{ass:Q}, the treatment assignment mechanism satisfies Assumptions \ref{ass:assignment}--\ref{ass:pairs}, and in addition Assumption \ref{ass:adjustment} is satisfied. Then, 
\begin{align*}
\hat \beta_n^Y & \stackrel{P}{\to} \beta^{\tilde Y} = (2 E[\var[\zeta_i | X_i]])^{-1} E[\cov[\zeta_i, \tilde Y_i(1) + \tilde Y_i(0) | X_i]] \\
\hat \beta_n^D & \stackrel{P}{\to} \beta^D = (2 E[\var[\zeta_i | X_i]])^{-1} E[\cov[\zeta_i, D_i(1) + D_i(0) | X_i]]~.
\end{align*}
In addition, \eqref{eq:m-Y-hat-close}--\eqref{eq:m-D-close} and Assumption \ref{ass:m} are satisfied for $\hat m_{a, \tilde Y}(X_i, W_i) = \zeta_i' \hat \beta_n^Y$, $\hat m_{a, D}(X_i, W_i) = \zeta_i' \hat \beta_n^D$, $m_{1, \tilde Y}(X_i, W_i) = m_{0, \tilde Y}(X_i, W_i) = \zeta_i' \beta^{\tilde Y}$, and $m_{1, \tilde D}(X_i, W_i) = m_{0, \tilde D}(X_i, W_i) = \zeta_i' \beta^D$. Moreover, $\hat \Delta_n^{\rm adj}$ is optimal in the sense of minimizing $\nu_{\rm adj}^2$ among all choices of $m_{a,D}$ and $m_{a,\tilde Y}$ that are linear in $\zeta_i$.
\end{theorem}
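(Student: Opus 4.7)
The plan is to establish the three components of the theorem in sequence: the probability limits of the OLS estimators $\hat\beta_n^Y$ and $\hat\beta_n^D$; verification of Assumption \ref{ass:m} and \eqref{eq:m-Y-hat-close}--\eqref{eq:m-D-close} for the resulting working models; and optimality of the induced variance among linear adjustments. For the limits, I would apply the Frisch-Waugh-Lovell theorem to partial out the pair fixed effects in \eqref{eq:pfe-Y}, reducing the estimator of $(\alpha^Y,\beta^Y)$ to an OLS problem in the within-pair-demeaned variables. Because each pair has exactly one treated and one control unit, the demeaned variables take the form $\pm\frac{1}{2}$ times the pair's difference, so the normal equations become a $(1+p)\times(1+p)$ system driven entirely by pair-level differences; in particular $\frac{1}{n}\sum_i \tilde\zeta_i\tilde\zeta_i' = \frac{1}{2n}\sum_{j=1}^n (\zeta_{\pi(2j-1)}-\zeta_{\pi(2j)})(\zeta_{\pi(2j-1)}-\zeta_{\pi(2j)})'$, while the cross term $\frac{1}{n}\sum_i\tilde A_i\tilde\zeta_i$ is $o_P(1)$ by the within-pair randomization in Assumption \ref{ass:assignment} together with $L^2$-boundedness of $\zeta_i$ in Assumption \ref{ass:adjustment}(b). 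Following the argument of Lemma S.1.4 in \cite{bai2022mp}, one replaces each pair-level cross product by its conditional expectation given the pair's $X$-values and then uses the Lipschitz conditions in Assumption \ref{ass:adjustment}(c) to collapse the two nearby conditional expectations to a single one as $\|X_{\pi(2j-1)}-X_{\pi(2j)}\|\to 0$ in the averaged sense guaranteed by Assumption \ref{ass:pairs}. This yields $\frac{1}{n}\sum_i\tilde\zeta_i\tilde\zeta_i'\cp E[\var[\zeta_i|X_i]]$ and, by the analogous computation with one $\zeta$-factor replaced by $Y$, $\frac{1}{n}\sum_i\tilde\zeta_i\tilde Y_i\cp \frac{1}{2}E[\cov[\zeta_i,\tilde Y_i(1)+\tilde Y_i(0)|X_i]]$. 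Solving the limiting normal equations delivers the claimed limit for $\hat\beta_n^Y$, and the same argument with $D_i$ in place of $Y_i$ gives the limit for $\hat\beta_n^D$.

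The high-level conditions then follow easily. Writing $\hat m_{a,\tilde Y}(X_i,W_i)-m_{a,\tilde Y}(X_i,W_i)=\zeta_i'(\hat\beta_n^Y-\beta^{\tilde Y})$, the LHS of \eqref{eq:m-Y-hat-close} factors as $(\hat\beta_n^Y-\beta^{\tilde Y})' \cdot \frac{1}{\sqrt{2n}}\sum_i (2A_i-1)\zeta_i$; the first factor is $o_P(1)$ by step one, while the second equals $\frac{1}{\sqrt{2n}}\sum_{j=1}^n (2A_{\pi(2j-1)}-1)(\zeta_{\pi(2j-1)}-\zeta_{\pi(2j)})$, which is $O_P(1)$ since, conditional on $X^{(n)}$, it is a sum of independent mean-zero terms whose conditional variances sum to $O(n)$ by Assumption \ref{ass:adjustment}(b)--(c). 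The same factoring handles \eqref{eq:m-D-close}, and \eqref{eq:m-Y-hat-close-adj}--\eqref{eq:m-D-close-adj} reduce to $\|\hat\beta_n^Y-\beta^{\tilde Y}\|^2\cdot\frac{1}{2n}\sum_i\|\zeta_i\|^2=o_P(1)\cdot O_P(1)$ and its $D$-analogue. Assumption \ref{ass:m}(a) holds because $m_{a,\tilde YD}(X_i,W_i)=\zeta_i'(\beta^{\tilde Y}-\Delta(Q)\beta^D)$ has finite second moment under Assumption \ref{ass:adjustment}(b), and Assumption \ref{ass:m}(b) follows from Assumption \ref{ass:adjustment}(c) together with Assumption \ref{ass:Q}(c), since the relevant conditional moments are then linear combinations of Lipschitz functions.

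For optimality, only $\nu_{1,\rm adj}^2$ in the decomposition of $\nu_{\rm adj}^2$ from Theorem \ref{theorem:adjustment} depends on the working models, and when $m_{a,\tilde Y}$ and $m_{a,D}$ are linear in $\zeta_i$ with coefficients $\beta_a^{\tilde Y}$ and $\beta_a^D$, it depends on them only through the combination $\beta:=(\beta_1^{\tilde Y}+\beta_0^{\tilde Y})-\Delta(Q)(\beta_1^D+\beta_0^D)$, because $m_{1,\tilde YD}+m_{0,\tilde YD}=\zeta_i'\beta$. The objective
\[\nu_{1,\rm adj}^2 = \frac{1}{2} E[\var[E[Y_i^\ast(1)+Y_i^\ast(0)|X_i,W_i]-\zeta_i'\beta|X_i]]\]
is a strictly convex quadratic in $\beta$ (nondegenerate because $E[\var[\zeta_i|X_i]]$ is nonsingular by Assumption \ref{ass:adjustment}(a)) whose unique minimizer is $\beta^\ast = (E[\var[\zeta_i|X_i]])^{-1} E[\cov[\zeta_i,Y_i^\ast(1)+Y_i^\ast(0)|X_i]]$. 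Substituting the formulas for $\beta^{\tilde Y}$ and $\beta^D$ from step one and using $Y_i^\ast(a)=\tilde Y_i(a)-\Delta(Q)D_i(a)$ verifies $2(\beta^{\tilde Y}-\Delta(Q)\beta^D)=\beta^\ast$, which proves optimality.

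I expect the main obstacle to be the pair-averaging step underlying $\hat\beta_n^Y\cp\beta^{\tilde Y}$: the passage from pair-level sample averages to their population counterparts requires uniform control of conditional moments along the $X$-based pairing, and the cleanest route is to mimic the conditioning-and-Lipschitz argument of Lemma S.1.4 in \cite{bai2022mp} while carrying along a vector-valued $\zeta_i$ whose conditional moments are governed by Assumption \ref{ass:adjustment}(c) rather than Assumption \ref{ass:Q}(c). Once this is in place, the remaining steps are essentially linear algebra.
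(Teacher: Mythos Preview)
Your proposal is correct and follows essentially the same approach as the paper: Frisch--Waugh--Lovell to reduce $\hat\beta_n^Y,\hat\beta_n^D$ to pair-level sums, a conditioning-plus-Lipschitz argument to obtain the limits $(2E[\var[\zeta_i|X_i]])^{-1}E[\cov[\zeta_i,\cdot|X_i]]$, and a quadratic minimization in $\beta=(\beta_1^{\tilde Y}+\beta_0^{\tilde Y})-\Delta(Q)(\beta_1^D+\beta_0^D)$ for optimality. The only cosmetic differences are that the paper partials out both the fixed effects \emph{and} $A_i$ (writing $\hat\beta_n^Y$ directly in terms of the centered $\delta_j^\zeta$) rather than handling the $(1+p)\times(1+p)$ system via asymptotic block-diagonality, cites Theorem~4.2 of \cite{bai2023covariate} rather than the supplementary lemmas of \cite{bai2022mp} for the pair-averaging limits, and spells out the identity $E[\cov[\zeta_i,E[Y_i^\ast(1)+Y_i^\ast(0)|X_i,W_i]|X_i]]=E[\cov[\zeta_i,Y_i^\ast(1)+Y_i^\ast(0)|X_i]]$ that you use implicitly in writing $\beta^\ast$.
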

We conclude by noting that, since the unadjusted estimator $\hat{\Delta}_n$ is contained in the class of covariate adjusted estimators for which $m_{a,D}$ and $m_{a,\tilde Y}$ are linear in $\zeta_i$ (by setting both working models to zero), Theorem \ref{thm:zeta} establishes that the optimal linear adjusted estimator is guaranteed to be (weakly) more efficient than the unadjusted Wald estimator, regardless of whether or not the linear working models are correctly specified.


\begin{remark}\label{rem:FE_TSLS}
Although $\hat{\Delta}^{\rm adj}_n$ can be obtained using $\hat{\alpha}^{\rm IV}_n$ in this case, we emphasize that the usual heteroskedasticity-robust estimator of the variance of $\hat{\alpha}_n^{\rm IV}$ is not consistent for the limiting variance derived in Theorem \ref{theorem:adjustment}. We therefore recommend that practitioners instead use the consistent estimator presented in Theorem \ref{theorem:varianceestimation-adj}.
\end{remark}

\section{Simulations}\label{sec:sims}
\subsection{No Covariate Adjustment}\label{sec:sims_no}
In this section, we examine the finite-sample behavior of the estimation and inference procedures introduced in Section \ref{sec:main}. Following the simulation designs in \cite{bai2022mp}, the potential outcomes and take-up decisions are generated according to the equations:
\begin{align*}
    Y_i(d) &= \mu_d + m_d(X_i) + \sigma_d(X_i) \epsilon_{d, i} \\
    D_i(0) &= I \{ 0.2 X_i > \epsilon_{3,i}\} \\
    D_i(1) &= 
    \begin{cases}
        I \{0.5 + 0.2 X_i >  \epsilon_{4, i}\} & \text{if } D_i(0) = 0 \\
        1 & \text{otherwise}~,
    \end{cases}
\end{align*}
where $\mu_d$, $m_d(X_i)$, $\sigma_d(X_i)$, and $\epsilon_{d, i}$ are specified in each model below. We consider the following model specifications: 

\begin{itemize}
    \item []
    \begin{itemize}
        \item []
        \begin{itemize}
            \item [\textbf{\textbf{Model 1}:}]  $X_i \sim \text{Unif}[0, 1]$; $m_0(X_i) = m_1(X_i) = \left(X_i - \frac{1}{2}  \right)$; $\epsilon_{0, i}, \epsilon_{1,i} \sim N(0, 1)$; $\epsilon_{3, i}, \epsilon_{4, i} \sim \text{Unif}[0, 1]$; $\sigma_0(X_i) = \sigma_1(X_i) = 1$; $\mu_0 = 0; \mu_1 \in [-1, 1]$.
            
            \item [\textbf{\textbf{Model 2}:}] As in Model 1, but $m_0(X_i) = 0$, $m_1(X_i) = 10 \left(X_i^2 - \frac{1}{3} \right)$.

            \item [\textbf{\textbf{Model 3}:}] As in Model 2, but $\sigma_0(X_i) = \sigma_1(X_i) = X_i^2$.
        \end{itemize}
    \end{itemize}
\end{itemize}

For each model, let $\Delta_0$ denote the value of the LATE when $\mu_1 = 0$.\footnote{For each model, $\Delta_0$ is computed numerically with a large sample. The values are -0.0000203726,  0.0859858425,  0.0903371248 for Models 1-3.} Because $\dim(X_i) = 1$, we construct pairs by sorting units according to $X_i$ and matching adjacent units. Table \ref{table:rej-prob-models123} reports the rejection probabilities from 5,000 Monte Carlo replications for testing the null hypothesis \eqref{eq:H0} against the alternative implied by setting $\mu_1 = 1/2$,  using $t$-tests constructed using either the regression-based variance $\hat{\omega}_n^2$, the regression-based variance with pair fixed effects and HC1 correction $\hat{\omega}_{n, \rm pfe, HC1}^2$, or the consistent estimator $\hat{\nu}^2_n$. As expected given our theoretical results, tests based on $\hat \omega_n^2$ and $\hat{\omega}_{n, \rm pfe, HC1}^2$ are generally conservative, which leads to a loss of power under the alternative relative to the test based on our consistent variance estimator $\hat \nu^2_n$. Figure \ref{figure:power-curve} displays the power curves for testing the null hypothesis \eqref{eq:H0} for Models 1--3 with $\mu_1 \in [-1, 1]$ and $2n = 200$. We emphasize that, although $\hat{\omega}_{n, \rm pfe, HC1}^2$ is less conservative than  $\hat{\omega}_n^2$ for these simulation designs, our theoretical results in Section \ref{sec:variance} establish that this is not guaranteed to be the case in general.

\begin{table}[ht!]
    \centering
    \begin{tabular}{ c c  c c c c c c    }
         \toprule
         \multicolumn{1}{c}{} & 
         \multicolumn{1}{c}{} &
         \multicolumn{3}{c}{$H_0: \Delta = \Delta_0$} & \multicolumn{3}{c}{$H_1: \Delta = \Delta_0 + \frac{1}{2}$}  \\
         \cmidrule(lr){3-5} \cmidrule(lr){6-8} 
         Model & Sample Size & $\hat{\omega}_n^2$   & $\hat{\omega}_{n, \rm pfe, HC1}^2$ & $\hat \nu^2_n$ & $\hat{\omega}_n^2$  & $\hat{\omega}_{n, \rm pfe, HC1}^2$ & $\hat \nu^2_n$   \\
         \midrule
        \multirow{4}*{1} & 200 & 3.86 & 4.88 & 4.98 & 44.60 & 47.48 & 47.98\\
         & 800 & 4.10 & 4.84 & 4.96 & 95.84 & 96.42 & 96.48\\
         & 1600 & 3.92 & 4.72 & 4.78 & 99.84 & 99.84 & 99.84\\
         & 3200 & 4.40 & 5.34 & 5.34 & 100.00 & 100.00 & 100.00\\
        \addlinespace
        \multirow{4}*{2} & 200 & 1.72 & 3.12 & 4.60 & 10.92 & 13.86 & 19.94\\
         & 800 & 1.88 & 3.06 & 4.92 & 43.94 & 52.52 & 59.44\\
         & 1600 & 1.72 & 2.98 & 4.86 & 76.60 & 82.44 & 87.26\\
         & 3200 & 1.76 & 3.16 & 5.16 & 97.66 & 98.60 & 99.24\\
        \addlinespace
        \multirow{4}*{3} & 200 & 1.36 & 2.62 & 4.76 & 11.16 & 15.38 & 24.10\\
         & 800 & 1.38 & 2.46 & 5.00 & 51.72 & 63.00 & 71.76\\
         & 1600 & 1.12 & 2.34 & 4.78 & 85.48 & 91.34 & 94.64\\
         & 3200 & 1.26 & 2.40 & 4.80 & 99.38 & 99.68 & 99.86\\
        \bottomrule
        \end{tabular}
    \caption{Rejection rates for Models 1, 2, 3 without covariate adjustment} 
    \label{table:rej-prob-models123}
\end{table}



\begin{figure}[ht!]

\centering
\includegraphics[width=.5\textwidth]{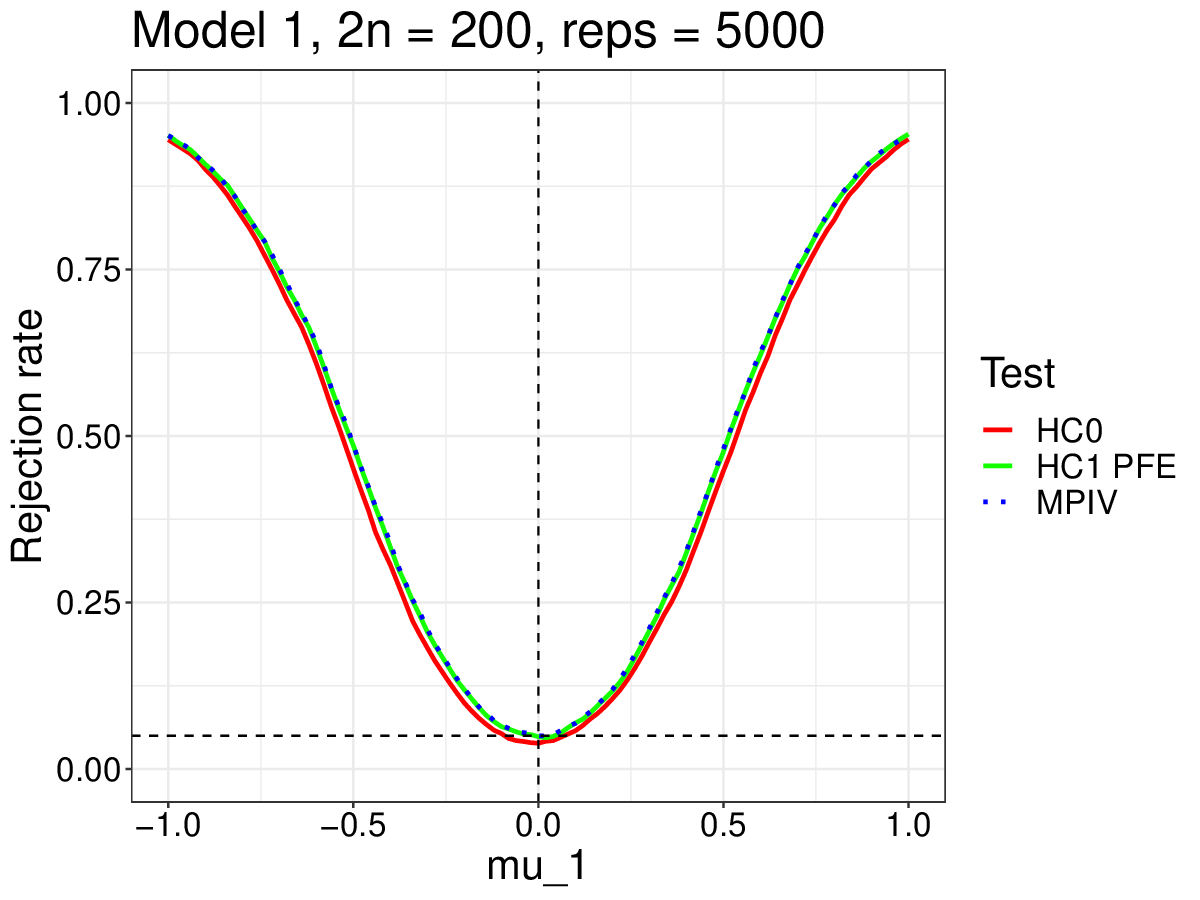}\hfill
\includegraphics[width=.5\textwidth]{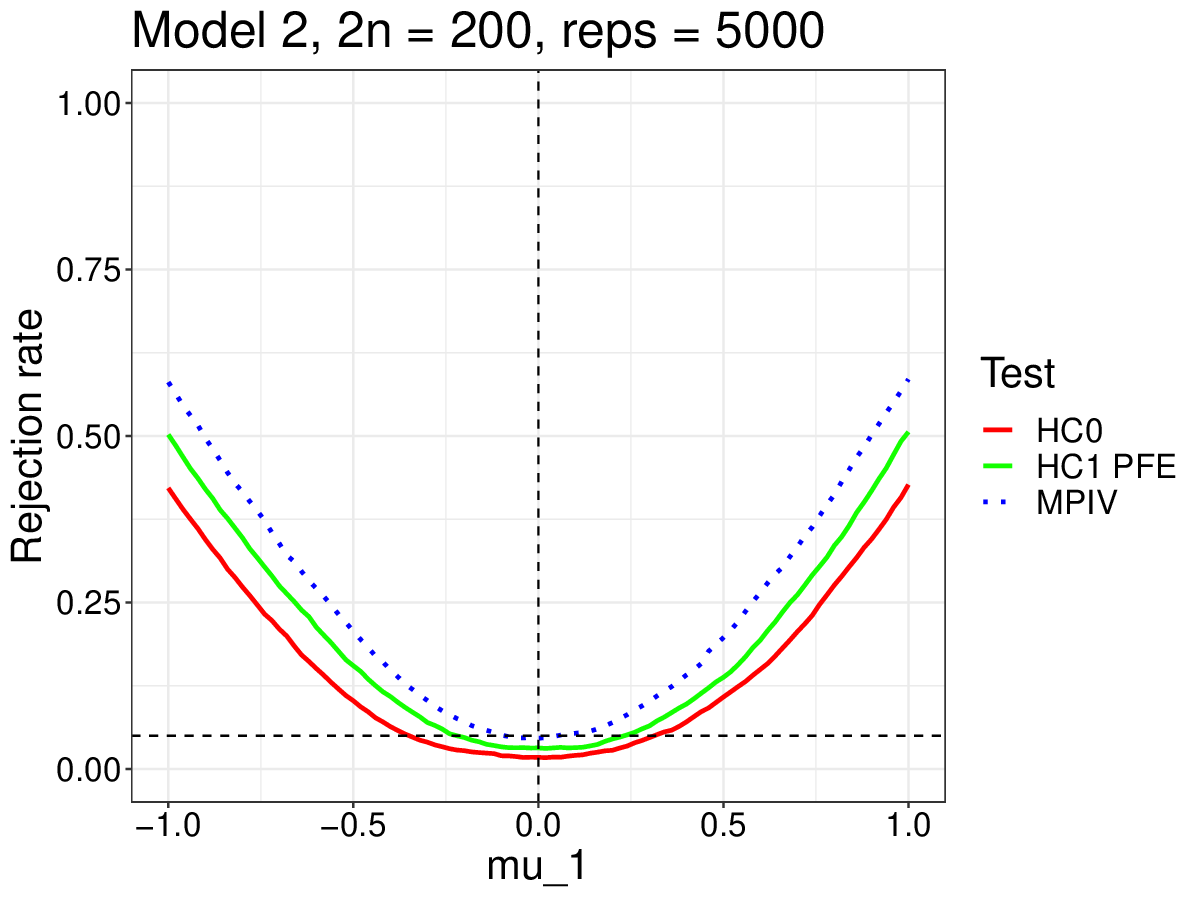}\hfill
\includegraphics[width=.5\textwidth]{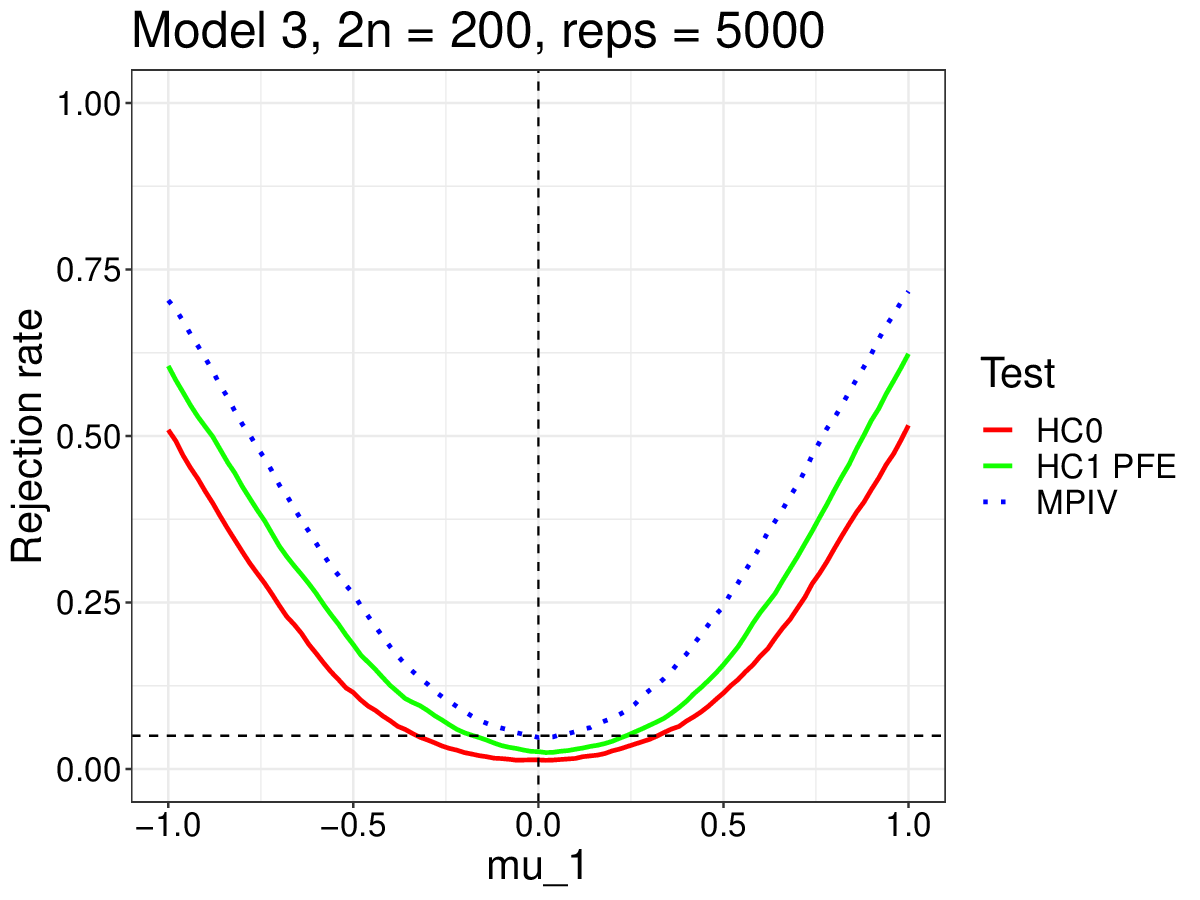}\hfill

\caption{Power curves for Models 1, 2, 3}
\label{figure:power-curve}
\end{figure}

\subsection{With Covariate Adjustment}
In this section, we examine the finite-sample behavior of the estimation and inference procedures introduced in Section \ref{sec:adjustment}. Following the simulation designs in \cite{bai2023covariate}, the potential outcomes and take-up decisions are generated according to the equations:
\begin{align*}
    Y_i(d) &= \mu_d +  m_d(X_i, W_i) + \sigma_d(X_i, W_i) \epsilon_{d, i} \\
    D_i(0) &= I \{ 0.2 X_i + 0.2 W_i X_i > \epsilon_{3, i}\} \\
    D_i(1) &= 
    \begin{cases}
        I \{0.75 + 0.2 X_i + 0.2 W_i X_i >  \epsilon_{4, i}\} & \text{if } D_i(0) = 0 \\
        1 & \text{otherwise}~,
    \end{cases}
\end{align*}
where $\mu_d$, $m_d(X_i, W_i)$, $\sigma_d(X_i, W_i)$, and $\epsilon_{d, i}$ are specified in each model below. We consider the following model specifications: 

\begin{itemize}
    \item []
    \begin{itemize}
        \item []
        \begin{itemize}

            \item [\textbf{\textbf{Model 1}:}] $(X_i, W_i)' = (\Phi(V_{1i}), \Phi(V_{2i}))'$, where $\Phi(\cdot)$ is the standard normal cumulative distribution function and
            \begin{align*}
                V_i = (V_{1i}, V_{2i})' \sim 
                N\left( 
                \begin{pmatrix}
                    0 \\ 0
                \end{pmatrix},
                \begin{pmatrix}
                    1 & \rho \\
                    \rho & 1
                \end{pmatrix}
                \right)~;
            \end{align*}
            $m_0(X_i, W_i) = m_1(X_i, W_i) = \gamma \left( W_i - \frac{1}{2} \right)$; $\epsilon_{0, i}, \epsilon_{1,i} \sim N(0, 1)$; $\epsilon_{3, i}, \epsilon_{4, i} \sim U[0,1]$; $\sigma_0(X_i, W_i) = \sigma_1(X_i, W_i) = 1$; $\mu_0 = 0; \mu_1 \in \{0, 0.5\}$; $\gamma = 4$; $\rho = 0.2$.

            \item [\textbf{\textbf{Model 2}:}] As in Model 1, but $m_0(X_i, W_i) = m_1(X_i, W_i) = \exp\left(\gamma \left(W_i - \frac{1}{2} \right)\right)$.
            
            \item [\textbf{\textbf{Model 3}:}]

            As in Model 1, but $(X_i, W_i)' = (V_{1i}, V_{1i} V_{2i})'$, and $m_0(X_i, W_i) = m_1(X_i, W_i) = \gamma_1(W_i - \rho) + \gamma_2\left( \Phi(W_i) - \frac{1}{2} \right) + \gamma_3 (X_i^2 - 1)$; $(\gamma_1, \gamma_2, \gamma_3) = (2, 1, 2)$.

            \item [\textbf{\textbf{Model 4}:}] As in Model 3, but $m_1(X_i, W_i) = m_0(X_i, W_i) + \left( \Phi(X_i) - \frac{1}{2} \right)$.
        \end{itemize}
    \end{itemize}
\end{itemize}

As in Section \ref{sec:sims_no}, for each model, let $\Delta_0$ denote the value of the LATE when $\mu_1 = 0$.\footnote{For each model, $\Delta_0$ is computed numerically with a large sample. The values are -0.0007846080, -0.0005474909, -0.0013187170,  0.0224019752 for Models 1-4.} We construct pairs by sorting units according to $X_i$ and matching adjacent units. Table \ref{table:rej-prob-models1234-XW} reports the rejection probabilities from 5,000 Monte Carlo replications for testing the null hypothesis \eqref{eq:H0} against the alternative implied by setting $\mu_1 = 1/2$, using $t$-tests constructed using three different linear covariate-adjusted estimators: the unadjusted estimator $\hat{\Delta}_n$ (denoted by \textbf{unadj}), the estimator obtained from a two-stage least squares regression of $Y_i$ on a constant, $D_i$, and covariates $W_i$, but \emph{without} pair fixed effects (denoted by \textbf{naive}), and the optimal linear estimator described in \eqref{eq:pfe-iv} with $\zeta_i = W_i$ (denoted by \textbf{pfe}). Note that each of these estimators corresponds to a special case of our regression adjusted estimator $\hat{\Delta}_n^{\rm adj}$ for three different working models.

As expected given our theoretical results in Section \ref{sec:gen_adjust}, all three tests maintain exact size under the null hypothesis. Power under the alternative hypothesis increases for all model specifications and sample sizes as we move from the unadjusted estimator to the naive linear adjusted estimator and from the naive estimator to the optimal linear adjusted estimator. Table \ref{table:mse-models1234-XW} reports the bias and root mean-squared error (RMSE) for the same set of models and sample sizes. Here, we find that RMSE decreases as we move from the unadjusted estimator to the optimal linear estimator, with no qualitative differences in bias.

\begin{table}[ht!]
    \centering
    \begin{tabular}{ c c  c c c c c c    }
         \toprule
         \multicolumn{1}{c}{} & 
         \multicolumn{1}{c}{} &
         \multicolumn{3}{c}{$H_0: \Delta = \Delta_0$} & \multicolumn{3}{c}{$H_1: \Delta = \Delta_0 + \frac{1}{2}$}  \\
         \cmidrule(lr){3-5} \cmidrule(lr){6-8} 
         Model & Sample Size & \textbf{unadj} & \textbf{naive} & \textbf{pfe} & \textbf{unadj} & \textbf{naive} & \textbf{pfe}    \\
         \midrule
        \multirow{4}*{1} & 200 & 4.98 & 5.54 & 5.68 & 42.22 & 75.36 & 75.04\\
         & 800 & 5.00 & 5.32 & 5.26 & 93.08 & 99.90 & 99.90\\
         & 1600 & 5.34 & 4.84 & 4.88 & 99.72 & 100.00 & 100.00\\
         & 3200 & 4.12 & 4.76 & 4.76 & 100.00 & 100.00 & 100.00\\
        \addlinespace
        \multirow{4}*{2} & 200 & 4.90 & 5.58 & 5.82 & 24.70 & 52.50 & 52.48\\
         & 800 & 5.60 & 5.44 & 5.48 & 69.08 & 98.02 & 97.90\\
         & 1600 & 5.26 & 4.66 & 4.62 & 94.16 & 100.00 & 100.00\\
         & 3200 & 4.40 & 5.12 & 5.12 & 99.90 & 100.00 & 100.00\\
        \addlinespace
        \multirow{4}*{3} & 200 & 5.08 & 5.36 & 5.00 & 15.06 & 36.30 & 46.98\\
         & 800 & 5.14 & 5.30 & 5.30 & 42.74 & 90.12 & 97.96\\
         & 1600 & 4.48 & 4.84 & 4.80 & 71.26 & 99.46 & 99.98\\
         & 3200 & 5.38 & 4.38 & 4.60 & 94.32 & 100.00 & 100.00\\
        \addlinespace
        \multirow{4}*{4} & 200 & 5.00 & 5.50 & 5.00 & 14.70 & 36.02 & 46.80\\
         & 800 & 5.24 & 5.28 & 5.30 & 41.56 & 90.08 & 97.96\\
         & 1600 & 4.72 & 4.64 & 4.66 & 69.50 & 99.46 & 99.98\\
         & 3200 & 5.42 & 4.32 & 4.42 & 93.54 & 100.00 & 100.00\\
        \bottomrule
        \end{tabular}
    \caption{Rejection rates for Models 1, 2, 3, 4 with additional covariates} 
    \label{table:rej-prob-models1234-XW}
\end{table}

\begin{table}[ht!]
    \centering
    \begin{tabular}{ c c  c c c c c c    }
         \toprule
         \multicolumn{1}{c}{} & 
         \multicolumn{1}{c}{} &
         \multicolumn{3}{c}{Bias} & \multicolumn{3}{c}{$\sqrt{\text{MSE}}$}  \\
         \cmidrule(lr){3-5} \cmidrule(lr){6-8} 
         Model & Sample Size & \textbf{unadj} & \textbf{naive} & \textbf{pfe} & \textbf{unadj} & \textbf{naive} & \textbf{pfe}     \\
         \midrule
        \multirow{4}*{1} & 200 & -0.00373 & -0.00025 & -0.00066 & 0.28605 & 0.19232 & 0.19288\\
         & 800 & -0.00493 & -0.00168 & -0.00175 & 0.14479 & 0.09594 & 0.09599\\
         & 1600 & 0.00001 & 0.00049 & 0.00047 & 0.10169 & 0.06563 & 0.06568\\
         & 3200 & -0.00037 & -0.00041 & -0.00042 & 0.06927 & 0.04673 & 0.04673\\
        \addlinespace
        \multirow{4}*{2} & 200 & -0.00822 & -0.00299 & -0.00345 & 0.39866 & 0.25007 & 0.25065\\
         & 800 & -0.00744 & -0.00261 & -0.00266 & 0.20069 & 0.12358 & 0.12368\\
         & 1600 & -0.00133 & -0.00068 & -0.00070 & 0.14284 & 0.08597 & 0.08602\\
         & 3200 & -0.00104 & -0.00109 & -0.00111 & 0.09663 & 0.06114 & 0.06113\\
        \addlinespace
        \multirow{4}*{3} & 200 & 0.00010 & -0.00676 & -0.00507 & 0.58324 & 0.32797 & 0.27059\\
         & 800 & -0.00461 & 0.00126 & -0.00015 & 0.28715 & 0.15316 & 0.12362\\
         & 1600 & -0.00004 & 0.00143 & 0.00115 & 0.19616 & 0.10390 & 0.08345\\
         & 3200 & -0.00195 & 0.00066 & -0.00001 & 0.14086 & 0.07203 & 0.05864\\
        \addlinespace
        \multirow{4}*{4} & 200 & 0.00191 & -0.00502 & -0.00336 & 0.59398 & 0.33014 & 0.27195\\
         & 800 & -0.00268 & 0.00334 & 0.00188 & 0.29275 & 0.15447 & 0.12453\\
         & 1600 & 0.00168 & 0.00319 & 0.00290 & 0.20024 & 0.10477 & 0.08418\\
         & 3200 & -0.00035 & 0.00231 & 0.00163 & 0.14366 & 0.07268 & 0.05914\\
        \bottomrule
        \end{tabular}
    \caption{Bias and RMSE for Models 1, 2, 3, 4 with additional covariates} 
    \label{table:mse-models1234-XW}
\end{table}










\section{Empirical Application}\label{sec:empirical-application}

In this section, we illustrate our findings by revisiting the empirical application in \cite{groh2016macroinsurance}. \cite{groh2016macroinsurance} designed a matched-pair experiment in Egypt to study the effect on microenterprises of acquiring insurance against macroeconomic shocks.\footnote{To most closely align the dataset with our theoretical results, we made the following modifications to the dataset: (1) for each outcome variable, we drop pairs if at least one of the individuals in that pair has a missing outcome variable, (2) we drop pairs if at least one of the individuals in that pair is missing treatment assignment (the eligibility of purchasing the insurance), treatment status (whether a company actually purchased the insurance), or any baseline covariates, (3) we keep only pairs with exactly two individuals (there were 39 pairs with only one individual and one ``block" with 16 individuals), 
(4) if necessary, we drop one pair from the end of the resulting dataset to ensure that the sample size is divisible by 4. (5) to construct the pairs of pairs when computing $\hat{\nu}_{n}$ and $\hat{\nu}_{n,adj}$, we use the \texttt{R} package \texttt{nbpMatching} to match pairs of pairs such that the conditions in Theorem 4.3 of \cite{bai2022mp} are satisfied. Modifications (1)-(5) result in an average sample size reduction of 96 observations (3.29\% of total sample size) across outcomes.} The eligibility to purchase macroinsurance was offered to companies in the treatment group.
The take-up rate of purchasing in the treatment group was 37\%: among 1481 companies in the treatment group, 548 of them purchased insurance. We also note among 1480 companies in the control group, 5 of them purchased insurance as well.

Table \ref{table:empirical-application} reports the estimated LATEs for a collection of outcomes, using both the unadjusted Wald estimator $\hat{\Delta}_n$ as well as the linearly adjusted estimator $\hat{\Delta}_n^{\rm adj}$ which uses the same covariates as those considered in the analysis in \cite{groh2016macroinsurance}.\footnote{We note however that we exclude the female dummy and branchid dummies, which were used in the original regression specifications in \cite{groh2016macroinsurance}, since these are  perfectly collinear with our pair fixed effects.} For the unadjusted estimator $\hat{\Delta}_n$ we report the standard errors obtained from the regression-based variance estimators $\hat{\omega}_n^2$ and $\hat{\omega}_{n, \rm pfe, \rm HC1}^2$ as well as the standard errors obtained from our consistent variance estimators $\hat{\nu}^2_n$. For the adjusted estimator $\hat{\Delta}_n^{\rm adj}$ we report the standard errors obtained from our consistent variance estimator $\hat{\nu}^{2}_{n, \rm adj}$. Our findings are consistent with the theoretical results presented in Sections \ref{sec:main} and \ref{sec:adjustment}: for the unadjusted estimates, standard errors constructed from $\hat{\nu}^2_n$ are smaller than those constructed from $\hat{\omega}^2_n$ and comparable to those constructed from $\hat{\omega}^2_{n, \rm pfe, HC1}$. Given Theorem \ref{thm:hc1-fe}, this suggests that there is limited heterogeneity in $E[Y_i^{\ast}(1) - Y_i^{\ast}(0)|X_i]$ in this application. For the adjusted estimates, we find that the standard errors constructed from $\hat{\nu}^2_{n, \rm adj}$ are smaller than those constructed from $\hat{\nu}^2_n$ across all outcomes. However, point estimates for profits and the aggregate index change in such a way that these are no longer significant at the $10\%$ level.

\begin{table}[ht!]
\caption{Summary of Estimates Obtained from Empirical Application \cite{groh2016macroinsurance}}
\begin{adjustbox}{max width=\linewidth,center}
\begin{tabular}[t]{llllllllll}
\toprule
& & High& &High &Number &Any &Owner's &Monthly & Aggregate\\
& Profits & Profit & Revenue & Revenue & Employees & Worker & Hours & Consumption & Index\\
\midrule
$\hat \Delta_n$ & -241.602 & -0.029 & -2561.536 & -0.063 & -0.085 & 0.009 & -1.324 & -14.197 & -0.104\\
\addlinespace
SE from $\hat \omega^2_n$ & (149.696) & (0.022) & (937.455)$^{\ast\ast\ast}$ & (0.021)$^{\ast\ast\ast}$ & (0.144) & (0.050) & (2.506) & (91.237) & (0.069)\\
\addlinespace
SE from $\hat \omega^2_{n,\rm pfe, HC1}$ & (131.073)$^{\ast}$ & (0.021) & (861.356)$^{\ast\ast\ast}$ & (0.020)$^{\ast\ast\ast}$ & (0.137) & (0.048) & (2.170) & (78.825) & (0.061)$^{\ast}$\\
\addlinespace
SE from $\hat \nu^2_n$ & (130.946)$^{\ast}$ & (0.021) & (841.562)$^{\ast\ast\ast}$ & (0.020)$^{\ast\ast\ast}$ & (0.138) & (0.049) & (2.179) & (79.273) & (0.061)$^{\ast}$\\
\addlinespace[1em]
$\hat \Delta_n^{\rm adj}$ & -151.345 & -0.020 & -1802.829 & -0.052 & -0.023 & 0.024 & -0.401 & 6.541 & -0.069\\
\addlinespace
SE from $\hat \nu^2_{n, \rm adj}$ & (120.969) & (0.020) & (751.651)$^{\ast\ast}$ & (0.018)$^{\ast\ast\ast}$ & (0.133) & (0.047) & (2.116) & (74.388) & (0.057)\\
\addlinespace[1em]
Sample Size & 2804 & 2804 & 2800 & 2800 & 2824 & 2824 & 2796 & 2880 & 2880\\
\bottomrule
\end{tabular}
\end{adjustbox}
\begin{tablenotes}
\small \item Notes: $^\ast$: significant at 10\% level. $^{\ast\ast}$: significant at 5\% level. $^{\ast\ast\ast}$: significant at 1\% level. For each outcome listed in Table 7 of \cite{groh2016macroinsurance}, we report (a) the Wald estimates $\hat \Delta_n$ in \eqref{eq:hatDelta}, (b) the robust standard error obtained from $\hat \omega^2_n$ in \eqref{eq:robust}, (c) the robust standard error obtained from $\hat \omega^2_{n, \rm pfe, HC1}$ in \eqref{eq:pfe-HC1}, (d) the MPIV standard error obtained from $\hat \nu^2_n$ in \eqref{eq:mpiv-varest}, (e) the covariate-adjusted estimates $\hat \Delta_n^{\rm adj}$ with pair fixed effects based on \eqref{eq:pfe-Y} and \eqref{eq:pfe-D}, (f) the standard error obtained from $\hat \nu^2_{n, \rm adj}$ in \eqref{eq:adj-varest}, and (g) the sample sizes for the regression of each outcome variable. 
\end{tablenotes}
\label{table:empirical-application}
\end{table}

\section{Recommendations for Empirical Practice} \label{sec:recs}
Based on our theoretical results as well as the simulation study above, we conclude with some recommendations for practitioners when conducting inference about the local average treatment effect in matched-pairs experiments. Our findings are that the standard Wald estimator is generally consistent and asymptotically normal under matched-pair designs, but its limiting variance is smaller than what would be obtained under i.i.d.\ assignment.  It follows that inferences using standard heteroskedasticty-robust estimators of the variance will typically be conservative. We therefore recommend that practitioners use our consistent variance estimator $\hat{\nu}^2_n$ instead. When considering covariate adjustment, our findings are that the two-stage least squares estimator with pair fixed effects leads to an estimator that is optimal in the sense of having smallest limiting variance among all linearly-adjusted estimators. An important caveat, however, is that the usual heteroskedasticty-robust estimator of the variance is not consistent for its variance. As a result, we recommend that practitioners use our consistent variance estimator $\hat{\nu}^2_{n, \rm adj}$ instead.

\newpage
\bibliography{bibliography}

\clearpage
\appendix

\section{Proofs of Main Results}
\subsection{Proof of Theorem \ref{theorem:main}}
\begin{proof}
\begin{align*}
\sqrt{n} \left( \hat \Delta_n - \Delta(Q) \right) &= \sqrt{n} \left( \frac{\hat{\psi}_n(1) - \hat{\psi}_n(0)}{\hat{\phi}_n(1) - \hat{\phi}_n(0)} - \Delta(Q) \right) \\
&= \frac{\sqrt{n} \left( \hat{\psi}_n(1) - \hat{\psi}_n(0) - \Delta(Q) \left( \hat{\phi}_n(1) - \hat{\phi}_n(0)  \right) \right)}{\hat{\phi}_n(1) - \hat{\phi}_n(0) }~.
\end{align*}
Following identical arguments to the proof of Lemma S.1.5 in \cite{bai2022mp}, we have
\begin{equation} \label{eq:lln}
\hat{\phi}_n(a) = \frac{1}{n}\sum_{1 \le i \le 2n:A_i = a}D_i \xrightarrow{P} E[D_i(a)]~,
\end{equation}
from which it follows by the continuous mapping theorem and Assumption \ref{ass:Q}(d) that
\begin{equation}\label{eq:phi-hat-cp-pc}
\hat \phi_n(1) - \hat \phi_n(0) \cp P \{C_i = 1\}~.
\end{equation}
Therefore, to understand the limiting distribution of $\hat \Delta_n$, it suffices to show that
\begin{equation}\label{eq:psi-tilde-clt}
\sqrt{n} \left( \hat{\psi}_n(1) - \hat{\psi}_n(0) - \Delta(Q) \left( \hat{\phi}_n(1) - \hat{\phi}_n(0)  \right) \right) = \sqrt n(\tilde \psi_n(1) - \tilde \psi_n(0)) \cd N(0, \nu^2 P \{C_i = 1\}^2)~,
\end{equation}
where
\[ \tilde \psi_n(a) = \frac{1}{n} \sum_{1\leq i\leq 2n: A_i = a} Y_i^\ast(a)~, \]
and we recall that 
\begin{equation}\label{eq:Ystar-potential}
Y^\ast_i(a) = \tilde Y_i(a) - \Delta(Q)D_i(a)~.
\end{equation}

We may write 
\begin{align}\label{eq:basic_decomp}
\sqrt{n} \left( \tilde \psi_n(1) - \tilde \psi_n(0) \right) = A_n - B_n + C_n - D_n~,
\end{align}
where
\begin{align*}
A_n &= \frac{1}{\sqrt{n}} \sum_{1\leq i\leq 2n} \left( Y_i^\ast(1) A_i - E[Y_i^\ast(1) A_i | X^{(n)}, A^{(n)}] \right)~,\\
B_n &= \frac{1}{\sqrt{n}} \sum_{1\leq i\leq 2n} \left( Y_i^\ast(0) (1 - A_i) - E[Y_i^\ast(0) (1 - A_i) | X^{(n)}, A^{(n)}] \right)~,\\
C_n &= \frac{1}{\sqrt{n}} \sum_{1\leq i\leq 2n} \left( E[Y_i^\ast(1) A_i | X^{(n)}, A^{(n)}] - A_i E[Y_i^\ast(1)] \right)~,\\
D_n &= \frac{1}{\sqrt{n}} \sum_{1\leq i\leq 2n} \left( E[Y_i^\ast(0) (1 - A_i) | X^{(n)}, A^{(n)}] - (1 - A_i) E[Y_i^\ast(0)] \right)~.
\end{align*}
Note that the decomposition \eqref{eq:basic_decomp} holds because
\begin{equation}\label{eq:Ystar1-Ystar0}
\begin{split}
E[Y_i^\ast(1)] - E[Y_i^\ast(0)] 
&= E[\tilde Y_i(1) - \Delta(Q)D_i(1)] - E[\tilde Y_i(0) - \Delta(Q)D_i(0)] \\
&= E[\tilde Y_i(1) - \tilde Y_i(0)] - \Delta(Q) E[D_i(1) - D_i(0)] \\
&= E[(Y_i(1) - Y_i(0)) (D_i(1) - D_i(0))] - \Delta(Q) E[D_i(1) - D_i(0)] \\
&= 0~,
\end{split}
\end{equation}
where the first equation follows by (\ref{eq:Ystar-potential}), the second equation follows by inspection, the third equation follows by (\ref{eq:TildeY}), and the last equation follows by the definition of $\Delta(Q)$. 

Given the decomposition \eqref{eq:basic_decomp}, the proof of (\ref{eq:psi-tilde-clt}) follows identically to the proof of Lemma S.1.4 in \cite{bai2022mp} using the transformed outcomes $Y_i^*(a)$ in place of the original potential outcomes. For completeness, we sketch the proof here and refer the reader to \cite{bai2022mp} for details. It can be shown using Lindeberg's central limit theorem applied conditionally on $X^{(n)}, A^{(n)}$ that 
\[\sup_{t \in \mathbb{R}} \left | P\{A_n - B_n \le t |X^{(n)}, A^{(n)}\} - \Phi\left(t/\sqrt{E[\var[Y_i^*(0)|X_i]]+E[\var[Y_i^*(1)|X_i]]}\right)\right| \xrightarrow{P} 0~.\] Next, it can be shown that, unconditionally,
\[C_n - D_n \xrightarrow{d} N\left(0, \frac{1}{2}E\left[((E[Y^*_i(1)|X_i] - E[Y^*_i(1)]) - (E[Y^*_i(0)|X_i] - E[Y^*_i(0)]))^2\right]\right)~.\]
Further note $C_n$ and $D_n$ are functions of $X^{(n)}$ and $A^{(n)}$ only. Therefore, combining the above two expressions using, for instance, Lemma S.1.2 in \cite{bai2022mp}, we obtain \eqref{eq:psi-tilde-clt}. By Slutsky's theorem, the desired conclusion follows by (\ref{eq:phi-hat-cp-pc}) and (\ref{eq:psi-tilde-clt}) under Assumption \ref{ass:Q}(e).
\end{proof}

\subsection{Proof of Theorem \ref{theorem:varianceestimation}}
\begin{proof}
First note that as a consequence of \eqref{eq:phi-hat-cp-pc} and the continuous mapping theorem, 
\[\left(\hat{\phi}_n(1) - \hat{\phi}_n(0)\right)^2 \cp P \{C_i = 1\}^2~.\]
It thus suffices to show that the numerator converges to the desired quantity.
Denote the (infeasible) adjusted observed outcome as
\begin{equation}\label{eq:Ystar}
Y^\ast_i = Y_i - \Delta(Q) D_i~.
\end{equation}
Note that the observed adjusted outcome can alternatively be rewritten as
\begin{equation}\label{eq:obs-adj-Y}
Y_i^\ast = Y_i^\ast(1) A_i + Y_i^\ast(0) (1 - A_i)~.
\end{equation}
Consider the following infeasible version of the numerator, given by 
\[\tilde{\tau}^2_n - \frac{1}{2}(\tilde{\lambda}^2_n + \tilde{\Gamma}^2_n)~,\]
where 
\[\tilde{\tau}^2_n = \frac{1}{n}\sum_{1 \le j \le n}(Y^\ast_{\pi(2j)} - Y^\ast_{\pi(2j-1)})^2~,\]
\[\tilde{\lambda}^2_n = \frac{2}{n}\sum_{1 \le j \le \lfloor \frac{n}{2} \rfloor}\left(Y^\ast_{\pi(4j-3)} - Y^\ast_{\pi(4j-2)}\right)\left(Y^\ast_{\pi(4j-1)} - Y^\ast_{\pi(4j)}\right)\left(A_{\pi(4j-3)} - A_{\pi(4j-2)}\right)\left(A_{\pi(4j-1)} - A_{\pi(4j)}\right)~,\]
\[\tilde{\Gamma}_n = \frac{1}{n}\sum_{1 \le i \le 2n: A_i = 1}Y_i^\ast - \frac{1}{n}\sum_{1 \le i \le 2n: A_i = 0}Y_i^\ast~.\]
It follows immediately from Assumption \ref{ass:Q} that Assumption 2.1(b)--(c) in \cite{bai2022mp} are satisfied for the transformed outcomes $Y_i^\ast(a)$, and thus it follows from Lemmas S.1.5, S.1.6, and S.1.7 in \cite{bai2022mp} that this infeasible numerator converges to the desired quantity. It thus remains to show that 
\begin{align}
\label{eq:tau} \hat{\tau}^2_n & = \tilde{\tau}^2_n + o_P(1) \\
\label{eq:lambda-P} \hat{\lambda}^2_n & = \tilde{\lambda}^2_n + o_P(1) \\
\label{eq:gamma-P} \hat{\Gamma}_n & = \tilde{\Gamma}_n + o_P(1)~.
\end{align}
We begin with \eqref{eq:tau}. To see this, note that 
\begin{multline*}
\frac{1}{n}\sum_{1 \le j \le n}\left\{(Y^\ast_{\pi(2j)} - Y^\ast_{\pi(2j-1)})^2 - (\hat{Y}_{\pi(2j)} - \hat{Y}_{\pi(2j-1)})^2\right\} \\
= \frac{1}{n}\sum_{1 \le j \le n}\left\{(Y^\ast_{\pi(2j)} - Y^\ast_{\pi(2j-1)})^2 - (\hat{Y}_{\pi(2j)} - \hat{Y}_{\pi(2j-1)})^2\right\}A_{\pi(2j)} \\
+ \frac{1}{n}\sum_{1 \le j \le n}\left\{(Y^\ast_{\pi(2j)} - Y^\ast_{\pi(2j-1)})^2 - (\hat{Y}_{\pi(2j)} - \hat{Y}_{\pi(2j-1)})^2\right\}A_{\pi(2j-1)}~.
\end{multline*}
It thus suffices to show that each component on the RHS converges in probability to zero. We only show the first since the second follows symmetrically. From the definitions of $Y^\ast_i$ and $\hat{Y}_i$ we obtain that 
\begin{align*}
& \frac{1}{n}\sum_{1 \le j \le n}\left\{(Y^\ast_{\pi(2j)} - Y^\ast_{\pi(2j-1)})^2 - (\hat{Y}_{\pi(2j)} - \hat{Y}_{\pi(2j-1)})^2\right\}A_{\pi(2j)} \\
& = \frac{1}{n}\sum_{1 \le j \le n}\Big\{(\tilde Y_{\pi(2j)}(1) - \tilde Y_{\pi(2j-1)}(0) - \Delta(Q)(D_{\pi(2j)}(1) - D_{\pi(2j-1)}(0)))^2 \\
& \hspace{3em} - (\tilde Y_{\pi(2j)}(1) - \tilde Y_{\pi(2j-1)}(0) - \hat{\Delta}_n(D_{\pi(2j)}(1) - D_{\pi(2j-1)}(0)))^2 \Big\}A_{\pi(2j)} \\
& = -2(\Delta(Q) - \hat{\Delta}_n)\frac{1}{n}\sum_{1 \le j \le n}\left\{(\tilde Y_{\pi(2j)}(1) - \tilde Y_{\pi(2j-1)}(0))((D_{\pi(2j)}(1) - D_{\pi(2j-1)}(0))\right\}A_{\pi(2j)} \\
& \hspace{3em} + (\Delta(Q)^2 - \hat{\Delta}_n^2)\frac{1}{n}\sum_{1 \le j \le n}\left\{(D_{\pi(2j)}(1) - D_{\pi(2j-1)}(0))^2\right\}A_{\pi(2j)}
\end{align*}
Next, note that by the triangle inequality, Assumption \ref{ass:Q}(b) and the weak law of large numbers, 
\begin{multline*}
\left|\frac{1}{n}\sum_{1 \le j \le n}\left\{(\tilde Y_{\pi(2j)}(1) - \tilde Y_{\pi(2j-1)}(0))((D_{\pi(2j)}(1) - D_{\pi(2j-1)}(0))\right\}A_{\pi(2j)}\right| \\
\le \frac{1}{n}\sum_{1 \le i \le 2n}|\tilde Y_i(1)| + \frac{1}{n}\sum_{1 \le i \le 2n}|\tilde Y_i(0)| = O_P(1)~,  
\end{multline*}
and since $D$ and $A$ are binary,
\[\left|\frac{1}{n}\sum_{1 \le j \le n}\left\{(D_{\pi(2j)}(1) - D_{\pi(2j-1)}(0))^2\right\}A_{\pi(2j)}\right| \le 1~,\]
and hence the result follows from the fact that $\hat{\Delta}_n \cp \Delta(Q)$ by Theorem \ref{theorem:main}.
To show \eqref{eq:lambda-P}, note that
\begin{align*}
& \hat \lambda_n^2 - \tilde \lambda_n^2 \\
& = (\hat \Delta_n - \Delta(Q))^2 \frac{2}{n} \sum_{1\leq j\leq \lfloor \frac{n}{2} \rfloor} ( D_{\pi(4j-3)} - D_{\pi(4j-2)} ) ( D_{\pi(4j-1)} - D_{\pi(4j)} ) \\
& \hspace{5em} \times (A_{\pi(4j-3)} - A_{\pi(4j-2)}) (A_{\pi(4j-1)} - A_{\pi(4j)}) \\
&\hspace{0.5em} - (\hat \Delta_n - \Delta(Q)) \frac{2}{n}\sum_{1 \le j \le \lfloor \frac{n}{2} \rfloor} (D_{\pi(4j-3)} - D_{\pi(4j-2)}) (Y^\ast_{\pi(4j-1)} - Y^\ast_{\pi(4j)}) \\
& \hspace{5em} \times (A_{\pi(4j-3)} - A_{\pi(4j-2)}) (A_{\pi(4j-1)} - A_{\pi(4j)}) \\
& \hspace{0.5em} - (\hat \Delta_n - \Delta(Q)) \frac{2}{n}\sum_{1 \le j \le \lfloor \frac{n}{2} \rfloor} (D_{\pi(4j-1)} - D_{\pi(4j)}) (Y^\ast_{\pi(4j-3)} - Y^\ast_{\pi(4j-2)}) \\
& \hspace{5em} \times (A_{\pi(4j-3)} - A_{\pi(4j-2)}) (A_{\pi(4j-1)} - A_{\pi(4j)})~.
\end{align*}
Note $\hat \Delta_n \cp \Delta(Q)$ because of Theorem \ref{theorem:main}. On the other hand,
\begin{align*}
& \Big | \frac{2}{n}\sum_{1 \le j \le \lfloor \frac{n}{2} \rfloor} (D_{\pi(4j-3)} - D_{\pi(4j-2)}) (Y^\ast_{\pi(4j-1)} - Y^\ast_{\pi(4j)}) (A_{\pi(4j-3)} - A_{\pi(4j-2)}) (A_{\pi(4j-1)} - A_{\pi(4j)}) \Big | \\
& \leq \frac{2}{n} \sum_{1 \le j \le \lfloor \frac{n}{2} \rfloor} (|Y^\ast_{\pi(4j-1)}| + |Y^\ast_{\pi(4j)}|) \\
& \leq \frac{2}{n} \sum_{1 \leq i \leq 2n} |Y_i^\ast(1)| + \frac{2}{n} \sum_{1 \leq i \leq 2n} |Y_i^\ast(0)|~,
\end{align*}
where the first inequality follows from the triangle inequality and the fact that $D$ and $A$ are binary, and the last inequality follows trivially. And since $D$ and $A$ are binary,
\[ \left| \frac{2}{n} \sum_{1\leq j\leq \lfloor \frac{n}{2} \rfloor} ( D_{\pi(4j-3)} - D_{\pi(4j-2)} ) ( D_{\pi(4j-1)} - D_{\pi(4j)} ) (A_{\pi(4j-3)} - A_{\pi(4j-2)}) (A_{\pi(4j-1)} - A_{\pi(4j)}) \right| \leq 1~. \]
\eqref{eq:lambda-P} then follows because
\[ \frac{2}{n} \sum_{1 \leq i \leq 2n} |Y_i^\ast(a)| = O_P(1) \]
for $a \in \{0, 1\}$ because of Assumption \ref{ass:Q}(b) and the weak law of large numbers. Finally, \eqref{eq:gamma-P} follows immediately from $\hat \Delta_n \cp \Delta(Q)$.
\end{proof}

\subsection{Proof of Theorem \ref{theorem:robust}}
\begin{proof}
Let $\hat U_i$ denote the $i$th residual generated by the two-stage least squares  estimator in a linear regression of $Y_i$ on a constant and $D_i$ using $A_i$ as an instrument for $D_i$. Recall the heteroskedasticity-robust variance estimator is defined as
\begin{equation} \label{eq:robust}
\hat \omega_n^2 = \left ( n \left ( \sum_{1 \leq i \leq 2n} \begin{pmatrix}
1 \\ A_i
\end{pmatrix}
\begin{pmatrix}
1 & D_i
\end{pmatrix} \right )^{-1} \sum_{1 \leq i \leq 2n} \hat U_i^2 \begin{pmatrix}
1 \\ A_i
\end{pmatrix}
\begin{pmatrix}
1 & A_i
\end{pmatrix} 
\left ( \sum_{1 \leq i \leq 2n} \begin{pmatrix}
1 \\ D_i
\end{pmatrix}
\begin{pmatrix}
1 & A_i
\end{pmatrix} \right )^{-1}
\right )_{2, 2}~,
\end{equation}
where the notation $(\cdot)_{2,2}$ denotes the $(2,2)$-element of its (matrix) argument.

Following identical arguments to those used to show \eqref{eq:lln}, it can be shown that 
\begin{align*}
\frac{1}{n} \sum_{1 \leq i \leq 2n} I\{ A_i = a\} D_i & \cp E[D_i(a)] \\
\frac{1}{n} \sum_{1 \leq i \leq 2n} I\{ A_i = a\} Y_i^r & \cp E[\tilde Y_i^r(a)] 
\end{align*}
for $r = 1, 2$. Note in addition that
\[ \hat U_i = Y_i - \frac{1}{2n} \sum_{1 \leq i \leq 2n} Y_i - \bigg ( D_i - \frac{1}{2n} \sum_{1 \leq i \leq 2n} D_i \bigg ) \hat \Delta_n~. \]
It follows from direct calculation that
\[ \hat \omega_n^2 = \frac{\displaystyle \frac{1}{n} \sum_{1 \leq i \leq 2n} \hat U_i^2}{\displaystyle \bigg ( \frac{2}{n} \sum_{1 \leq i \leq 2n} A_i D_i - \frac{1}{n} \sum_{1 \leq i \leq 2n} D_i \bigg )^2}~. \]
The conclusion then follows from the above derivations, the continuous mapping theorem, and additional direct calculations.
\end{proof}

\subsection{Proof of Theorem \ref{thm:hc1-fe}}
\begin{proof}
It follows from Lemma \ref{lem:FE_TSLS} applied without additional covariates that
\[ \hat \alpha_n = \hat \Delta_n~. \]
It follows from the orthogonality condition
\[ \sum_{1 \leq i \leq n} \left ( Y_i - \hat \alpha_n D_i - \sum_{1 \leq j \leq n} \hat \theta_{j, n} I \{i \in \{\pi(2j - 1), \pi(2j)\} \right ) I \{i \in \{\pi(2j - 1), \pi(2j)\} = 0 \]
that
\[ \hat \theta_{j, n} = \frac{1}{2} (\hat Y_{\pi(2j - 1)} + \hat Y_{\pi(2j)})~, \]
where $\hat Y_i$ is the transformed outcome defined in \eqref{eq:transformed}. As a result,
\begin{align*}
\hat \epsilon_{\pi(2j - 1)} & = \frac{1}{2} (\hat Y_{\pi(2j - 1)} - \hat Y_{\pi(2j)}) \\
\hat \epsilon_{\pi(2j)} & = \frac{1}{2} (\hat Y_{\pi(2j)} - \hat Y_{\pi(2j - 1)})~.
\end{align*}
Further note the residual of the projection of $D$ on the pair fixed effects is $(D_{\pi(2j - 1)} - D_{\pi(2j)}) / 2$ for $2j - 1$ and similarly for $2j$. The same goes for the residual of the projection of $A$ on the pair fixed effects. It then follows from Lemma \ref{lem:fwl} that the the heteroskedasticity-robust variance estimator for $\hat \alpha_n$ is
\begin{align*}
& \frac{\displaystyle \frac{1}{n} \frac{1}{8} \sum_{1 \leq j \leq n} (A_{\pi(2j - 1)} - A_{\pi(2j)})^2 (\hat Y_{\pi(2j - 1)} - \hat Y_{\pi(2j)})^2}{\displaystyle \left ( \frac{1}{n} \frac{1}{2} \sum_{1 \leq j \leq n} (A_{\pi(2j - 1)} - A_{\pi(2j)})(D_{\pi(2j - 1)} - D_{\pi(2j)}) \right )^2} \\
& = \frac{1}{2} \frac{\hat \tau_n^2}{(\hat \phi_n(1) - \hat \phi_n(0))^2}~.
\end{align*}
With HC1,
\[ \frac{2n}{2n - (1 + n)} \frac{1}{2} \frac{\hat \tau_n^2}{(\hat \phi_n(1) - \hat \phi_n(0))^2} = \frac{2n}{n - 1} \frac{1}{2} \frac{\hat \tau_n^2}{(\hat \phi_n(1) - \hat \phi_n(0))^2} = \frac{n}{n - 1} \frac{\hat \tau_n^2}{(\hat \phi_n(1) - \hat \phi_n(0))^2}~. \]
The conclusion then follows noting the limits in probability of all individual terms have been established in the proofs of earlier theorems.
\end{proof}

\subsection{Proof of Theorem \ref{theorem:adjustment}}
\begin{proof}
Note
\[ \sqrt n (\hat \Delta_n^{\rm adj} - \Delta(Q)) = \frac{\sqrt n(\hat \psi_n^{\rm adj}(1) - \hat \psi_n^{\rm adj}(0) - \Delta(Q)(\hat \phi_n^{\rm adj}(1) - \hat \phi_n^{\rm adj}(0)))}{\hat \phi_n^{\rm adj}(1) - \hat \phi_n^{\rm adj}(0)}~. \]
Following identical arguments to the proof of Theorem 3.1 in \cite{bai2023covariate},  under \eqref{eq:m-D-close},
\[ \hat \phi_n^{\rm adj}(1) - \hat \phi_n^{\rm adj}(0) - P \{C_i = 1\} = O_P(n^{-1/2})~, \]
and hence
\begin{equation}\label{eq:adj-phi-hat-cp-pc}
\hat \phi_n^{\rm adj}(1) - \hat \phi_n^{\rm adj}(0) \cp P \{C_i = 1\}~.
\end{equation}
Therefore, to understand the limiting distribution of $\hat \Delta_n^{\rm adj}$, it suffices to show that
\begin{equation}\label{eq:adj-clt}
\sqrt n(\hat \psi_n^{\rm adj}(1) - \hat \psi_n^{\rm adj}(0) - \Delta(Q)(\hat \phi_n^{\rm adj}(1) - \hat \phi_n^{\rm adj}(0))) = \sqrt n(\tilde \psi_n^{\rm adj}(1) - \tilde \psi_n^{\rm adj}(0)) \cd N(0, \nu_{1, \rm adj}^2 + \nu_{2, \rm adj}^2 + \nu_{3, \rm adj}^2)~,
\end{equation}
where
\begin{align*}
\tilde \psi_n^{\rm adj}(a) &= \frac{1}{2n} \sum_{1 \leq i \leq 2n} (2 I \{A_i = a\} (Y_i^\ast - \hat m^\ast_{a, \tilde Y D}(X_i, W_i)) 
+ \hat m^\ast_{a, \tilde Y D}(X_i, W_i))~, \\
\hat m^\ast_{a, \tilde Y D}(X_i, W_i) &= \hat m_{a, \tilde Y}(X_i, W_i) - \Delta(Q) \hat m_{a, D}(X_i, W_i)~.
\end{align*}
To show \eqref{eq:adj-clt} we will apply the arguments in the proof of Theorem 3.1 in \cite{bai2023covariate}. In order to do so, it suffices to decompose the left-hand side of \eqref{eq:adj-clt} in the following way. First note that \eqref{eq:m-Y-hat-close} and \eqref{eq:m-D-close} imply that
\begin{equation}\label{eq:m-star-hat-close}
\frac{1}{\sqrt{2n}} \sum_{1\leq i\leq 2n} (2A_i - 1) (\hat m^\ast_{a, \tilde Y D}(X_i, W_i) - m_{a, \tilde Y D}(X_i, W_i)) \cp 0~,
\end{equation}
so that
\begin{align*}
\tilde \psi_n^{\rm adj}(1) &= \frac{1}{2n} \sum_{1\leq i\leq 2n} (2 A_i (Y_i^\ast(1) - \hat m^\ast_{1, \tilde Y D}(X_i, W_i)) + \hat m^\ast_{1, \tilde Y D}(X_i, W_i) ) \\
&= \frac{1}{2n} \sum_{1\leq i\leq 2n} (2 A_i Y_i^\ast(1) - (2A_i - 1) \hat m^\ast_{1, \tilde Y D}(X_i, W_i) ) \\
&= \frac{1}{2n} \sum_{1\leq i\leq 2n} (2 A_i Y_i^\ast(1) - (2A_i - 1) m_{1, \tilde Y D}(X_i, W_i) ) + o_P(n^{-1/2}) \\
&= \frac{1}{2n} \sum_{1\leq i\leq 2n} (2 A_i Y_i^\ast(1) - A_i m_{1, \tilde Y D}(X_i, W_i) - (1 - A_i) m_{1, \tilde Y D}(X_i, W_i) ) + o_P(n^{-1/2})~,
\end{align*}
where the third equality follows from (\ref{eq:m-star-hat-close}). Similarly,
\begin{align*}
\tilde \psi_n^{\rm adj}(0) = \frac{1}{2n} \sum_{1\leq i\leq 2n} ( 2(1 - A_i) Y_i^\ast(0) - A_i m_{0, \tilde Y D}(X_i, W_i) - (1 - A_i) m_{0, \tilde Y D}(X_i, W_i) ) + o_P(n^{-1/2})~.
\end{align*}
Then
\begin{align*}
\tilde \psi_n^{\rm adj}(1) - \tilde \psi_n^{\rm adj}(0) &= \frac{1}{n} \sum_{1\leq i\leq 2n} A_i \phi^\ast_{1, i} - \frac{1}{n} \sum_{1\leq i\leq 2n} (1 - A_i) \phi^\ast_{0, i} + o_P(n^{-1/2})~,
\end{align*}
where
\begin{align*}
\phi^\ast_{1, i} &= Y_i^\ast(1) - \frac{1}{2} (m_{1, \tilde Y D}(X_i, W_i) + m_{0, \tilde Y D}(X_i, W_i) )~, \\
\phi^\ast_{0, i} &= Y_i^\ast(0) - \frac{1}{2} (m_{1, \tilde Y D}(X_i, W_i) + m_{0, \tilde Y D}(X_i, W_i) )~.
\end{align*}
Consider the decomposition
\begin{align*}
\sqrt{n} (\tilde \psi_n^{\rm adj}(1) - \tilde \psi_n^{\rm adj}(0)) = A^{\rm adj}_n - B^{\rm adj}_n + C^{\rm adj}_n - D^{\rm adj}_n + o_P(n^{-1/2})~,
\end{align*}
where
\begin{align*}
A^{\rm adj}_n &= \frac{1}{\sqrt{n}} \sum_{1\leq i\leq 2n} (A_i \phi^\ast_{1, i} - E[A_i \phi^\ast_{1, i} | X^{(n)}, A^{(n)}])~, \\
B^{\rm adj}_n &= \frac{1}{\sqrt{n}} \sum_{1\leq i\leq 2n} ( (1 - A_i) \phi^\ast_{1, i} - E[ (1 - A_i) \phi^\ast_{1, i} | X^{(n)}, A^{(n)}])~, \\
C^{\rm adj}_n &= \frac{1}{\sqrt{n}} \sum_{1\leq i\leq 2n} A_i (E[Y^\ast_i(1)|X_i] - E[Y^\ast_i(1)] )~, \\
D^{\rm adj}_n &= \frac{1}{\sqrt{n}} \sum_{1\leq i\leq 2n} (1 - A_i) (E[Y^\ast_i(0)|X_i] - E[Y^\ast_i(0)] )~,
\end{align*}
where we note that, as in the proof of Theorem \ref{theorem:main}, the decomposition holds because of \eqref{eq:Ystar1-Ystar0}. \eqref{eq:adj-clt} then follows by deriving the limit of $A^{\rm adj}_n - B^{\rm adj}_n$ conditional on $X^{(n)}, A^{(n)}$ and the unconditional limit of $C_n^{\rm adj} - D_n^{\rm adj}$, as in the proof of Theorem \ref{theorem:main}. Finally, by Slutsky's lemma, the conclusion of the theorem follows from \eqref{eq:adj-phi-hat-cp-pc} and \eqref{eq:adj-clt} under Assumption \ref{ass:Q}(e).
\end{proof}

\subsection{Proof of Theorem \ref{theorem:varianceestimation-adj}}
\begin{proof}
First note that (\ref{eq:adj-phi-hat-cp-pc}) and continuous mapping theorem imply 
\[\left(\hat{\phi}^{\rm adj}_n(1) - \hat{\phi}^{\rm adj}_n(0)\right)^2 \cp P \{C_i = 1\}^2~.\]
It thus suffices to show that the numerator converges to the desired quantity.

Consider the following infeasible version of the numerator, given by 
\[\tilde{\tau}^2_{n, \rm adj} - \frac{1}{2}(\tilde{\lambda}^2_{n, \rm adj} + \tilde{\Gamma}^2_{n, \rm adj})~,\]
where
\begin{align*}
\tilde{\tau}^2_{n, \rm adj} & = \frac{1}{n}\sum_{1 \le j \le n}(\hat Y^\ast_{\pi(2j), \rm adj} - \hat Y^\ast_{\pi(2j-1), \rm adj})^2~, \\
\tilde{\lambda}^2_{n, \rm adj} & = \frac{2}{n}\sum_{1 \le j \le \lfloor \frac{n}{2} \rfloor}\left(\hat Y^\ast_{\pi(4j-3), \rm adj} - \hat Y^\ast_{\pi(4j-2), \rm adj}\right)\left(\hat Y^\ast_{\pi(4j-1), \rm adj} - \hat Y^\ast_{\pi(4j), \rm adj}\right) \\
& \hspace{3em} \times \left(A_{\pi(4j-3)} - A_{\pi(4j-2)}\right)\left(A_{\pi(4j-1)} - A_{\pi(4j)}\right)~, \\
\tilde{\Gamma}_{n, \rm adj} & = \frac{1}{n}\sum_{1 \le i \le 2n: A_i = 1}\hat Y_{i, \rm adj}^\ast - \frac{1}{n}\sum_{1 \le i \le 2n: A_i = 0}\hat Y_{i, \rm adj}^\ast~, \\
\hat Y_{i, \rm adj}^\ast & = Y_i^\ast - \frac{1}{2} (\hat m^\ast_{1, \tilde Y D}(X_i, W_i) + \hat m^\ast_{0, \tilde Y D}(X_i, W_i))~.
\end{align*}
It then follows from Theorem 3.2 in \cite{bai2023covariate} applied on the infeasible outcomes $\hat{Y}^*_{i, \rm adj}$ that this infeasible numerator converges to the desired quantity. It thus remains to show that 
\begin{align}
\label{eq:tau-adj} \hat{\tau}^2_{n, \rm adj} & = \tilde{\tau}^2_{n, \rm adj} + o_P(1) \\
\label{eq:lambda-P-adj} \hat{\lambda}^2_{n, \rm adj} & = \tilde{\lambda}^2_{n, \rm adj} + o_P(1) \\
\label{eq:gamma-P-adj} \hat{\Gamma}_{n, \rm adj} & = \tilde{\Gamma}_{n, \rm adj} + o_P(1)~.
\end{align}
\eqref{eq:tau-adj}--\eqref{eq:gamma-P-adj} can be established exactly as in the proof of Theorem \ref{theorem:varianceestimation}.
\end{proof}

\subsection{Proof of Theorem \ref{thm:zeta}}
\begin{proof}
Define
\begin{align*}
\delta_j^Y & = (A_{\pi(2j - 1)} - A_{\pi(2j)})(Y_{\pi(2j - 1)} - Y_{\pi(2j)}) \\
\delta_j^\zeta & = (A_{\pi(2j - 1)} - A_{\pi(2j)})(\zeta_{\pi(2j - 1)} - \zeta_{\pi(2j)})~.
\end{align*}
It follows from the Frisch-Waugh-Lovell theorem that
\[ \hat \beta_n^Y = \left ( \frac{1}{n} \sum_{1 \leq j \leq n} (\delta_j^\zeta - \hat \Delta_n^\zeta) (\delta_j^\zeta - \hat \Delta_n^\zeta)' \right )^{-1} \frac{1}{n} \sum_{1 \leq j \leq n} (\delta_j^\zeta - \hat \Delta_n^\zeta) \delta_j^Y \]
where
\[ \hat \Delta_n^\zeta = \frac{1}{n} \sum_{1 \leq j \leq n} \delta_j^\psi~. \]
Identical arguments to the proof of Theorem 4.2 in \cite{bai2023covariate} imply
\begin{align*}
    \frac{1}{n} \sum_{1 \leq j \leq n} (\delta_j^\zeta - \hat \Delta_n^\zeta) (\delta_j^\zeta - \hat \Delta_n^\zeta)' & \stackrel{P}{\to} 2E[\var[\zeta_i | X_i]] \\
    \frac{1}{n} \sum_{1 \leq j \leq n} (\delta_j^\zeta - \hat \Delta_n^\zeta) \delta_j^Y & \stackrel{P}{\to} E[\cov[\zeta_i, \tilde Y_i(1) + \tilde Y_i(0)]]~,
\end{align*}
and the limit of $\hat \beta_n^Y$ follows. The result on $\hat \beta_n^D$ follows through the same arguments.

Next, in order to see that $\hat \Delta_n^{\rm adj}$ in Theorem \ref{thm:zeta} is the optimal linear adjustment, note that $\nu_{\rm adj}^2$ only depends on $m_{a, \tilde Y}(X_i, W_i)$ and $m_{a, D}(X_i, W_i)$ through $\nu_{1, \rm adj}^2$. Then for arbitrary linear adjustments $m_{a, \tilde Y}(X_i, W_i) = \zeta_i' \tilde{\beta}^{\tilde Y}(a)$ and $m_{a, D}(X_i, W_i) = \zeta_i' \tilde{\beta}^{D}(a)$ for $a \in \{0, 1\}$, $\nu_{1, \rm adj}^2$ can be re-written as
\begin{align*}
\nu_{1, \rm adj}^2
&= \frac{1}{2} E[ \var[ E\left[ Y_i^\ast(1) + Y_i^\ast(0) | X_i, W_i \right] 
- ( m_{1, \tilde Y D}(X_i, W_i) + m_{0, \tilde Y D}(X_i, W_i) ) | X_i ] ] \\
&= \frac{1}{2} E[ \var[ E\left[ Y_i^\ast(1) + Y_i^\ast(0) | X_i, W_i \right] -  \zeta_i' ( (\tilde \beta^{\tilde Y}(1) + \tilde \beta^{\tilde Y}(0) ) - \Delta(Q) (\tilde \beta^D(1) + \tilde \beta^D(0) )) | X_i ] ] \\
&= \frac{1}{2} E[ E[ (E\left[ Y_i^\ast(1) + Y_i^\ast(0) | X_i, W_i \right] - E\left[ Y_i^\ast(1) + Y_i^\ast(0) | X_i\right] -  \\
& \hspace{3em} (\zeta_i - E[\zeta_i|X_i])' ( (\tilde \beta^{\tilde Y}(1) + \tilde \beta^{\tilde Y}(0) ) - \Delta(Q) (\tilde \beta^D(1) + \tilde \beta^D(0) )) )^2 | X_i ] ] \\
&= \frac{1}{2} E[ (E\left[ Y_i^\ast(1) + Y_i^\ast(0) | X_i, W_i \right] - E\left[ Y_i^\ast(1) + Y_i^\ast(0) | X_i\right] -  \\
& \hspace{3em} (\zeta_i - E[\zeta_i|X_i])' ( (\tilde \beta^{\tilde Y}(1) + \tilde \beta^{\tilde Y}(0) ) - \Delta(Q) (\tilde \beta^D(1) + \tilde \beta^D(0) )) )^2  ]~,
\end{align*}
which minimized when
\begin{align}\label{eq:var_min}
(\beta^{\tilde Y}(1) + \beta^{\tilde Y}(0) ) - \Delta(Q) (\beta^D(1) + \beta^D(0) ) &= (E[\var[\zeta_i|X_i]])^{-1} E[\cov[E[Y_i^\ast(1) + Y_i^\ast(0)|X_i, W_i], \zeta_i | X_i]] \\
&= (E[\var[\zeta_i | X_i]])^{-1} E[ \cov[Y^\ast_i(1) + Y^\ast_i(0), \zeta_i | X_i] ]~,
\end{align}
where the first equality follows from the first order condition of minimizing $\nu_{1, \rm adj}^2$ and the second equality follows by the fact that
\begin{align*}
&E[\cov[E[Y_i^\ast(1) + Y_i^\ast(0)|X_i, W_i], \zeta_i | X_i] - \cov[Y^\ast_i(1) + Y^\ast_i(0), \zeta_i | X_i]] \\
=&E[\cov[E[Y_i^\ast(1) + Y_i^\ast(0)|X_i, W_i] - (Y^\ast_i(1) + Y^\ast_i(0)), \zeta_i | X_i]] \\
=&E[ (E[Y_i^\ast(1) + Y_i^\ast(0)|X_i, W_i] - (Y^\ast_i(1) + Y^\ast_i(0))) \zeta_i ] \\
=& E[E[ (E[Y_i^\ast(1) + Y_i^\ast(0)|X_i, W_i] - (Y^\ast_i(1) + Y^\ast_i(0))) \zeta_i |X_i, W_i ]] \\
=& E[(E[Y_i^\ast(1) + Y_i^\ast(0)|X_i, W_i] - E[Y_i^\ast(1) + Y_i^\ast(0)|X_i, W_i]) \zeta_i  ]\\
=&0~,
\end{align*}
where the fourth equality follows by the fact that $\zeta_i$ depends only on $X_i$ and $W_i$, the rest of the equalities follows by inspection. Finally, note that \eqref{eq:var_min} holds when $\tilde{\beta}^{\tilde Y}(a) = \beta^{\tilde Y}$ and $\tilde{\beta}^{D}(a) = \beta^D$ for $a \in \{0, 1\}$, as desired.
\end{proof}

\subsection{Auxiliary Results}

\begin{lemma}\label{lemma:efficiency-bound}
$\nu^2$ in Theorem \ref{theorem:main} matches the expression of the efficiency bound in Theorem 2 of \cite{frolich2007nonparametric}.
\end{lemma}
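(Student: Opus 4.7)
The plan is to show the claimed matching by directly translating Fr\"olich's efficiency bound into the paper's potential-outcomes notation and then verifying the algebraic equivalence with $\nu^2$.

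First, I would write Fr\"olich's (2007) Theorem 2 bound in a convenient form. His bound is most naturally expressed via the efficient influence function for the LATE under a binary instrument $A_i$ with propensity score $p(X_i) = P\{A_i = 1 \mid X_i\}$. Under Assumption \ref{ass:assignment} the conditioning on $A_i$ can be dropped from moments of $Y_i - \Delta(Q) D_i$: indeed $Y_i - \Delta(Q) D_i = \tilde Y_i(A_i) - \Delta(Q) D_i(A_i) = Y_i^\ast(A_i)$, and $Y_i^\ast(a) \indep A_i \mid X_i$, so that $E[Y_i - \Delta(Q) D_i \mid A_i = a, X_i] = E[Y_i^\ast(a) \mid X_i]$ and $\var[Y_i - \Delta(Q) D_i \mid A_i = a, X_i] = \var[Y_i^\ast(a) \mid X_i]$. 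Substituting these identifications into Fr\"olich's Theorem 2 gives
\[
V^{\rm F} = \frac{1}{P\{C_i = 1\}^2}\Biggl\{E\!\left[\frac{\var[Y_i^\ast(1) \mid X_i]}{p(X_i)}\right] + E\!\left[\frac{\var[Y_i^\ast(0) \mid X_i]}{1 - p(X_i)}\right] + E\!\left[(E[Y_i^\ast(1) \mid X_i] - E[Y_i^\ast(0) \mid X_i])^2\right]\Biggr\}.
\]

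Second, I would specialize to the matched-pairs-relevant case $p(X_i) \equiv 1/2$, which doubles each of the conditional variance terms. I would then use equation \eqref{eq:Ystar1-Ystar0}, i.e.\ $E[Y_i^\ast(1)] = E[Y_i^\ast(0)]$, to rewrite the last term: letting $Z = E[Y_i^\ast(1)\mid X_i] - E[Y_i^\ast(0)\mid X_i]$, the identity $E[Z] = 0$ yields
\[
E[Z^2] \;=\; \var[Z] \;=\; E\!\left[\bigl((E[Y_i^\ast(1) \mid X_i] - E[Y_i^\ast(1)]) - (E[Y_i^\ast(0) \mid X_i] - E[Y_i^\ast(0)])\bigr)^2\right].
\]
Combining these two reductions, Fr\"olich's bound evaluated at $p(X_i) = 1/2$ becomes
\[
V^{\rm F} = \frac{2}{P\{C_i = 1\}^2}\Biggl\{E[\var[Y_i^\ast(1)\mid X_i]] + E[\var[Y_i^\ast(0)\mid X_i]] + \tfrac{1}{2} E\!\left[\bigl((E[Y_i^\ast(1)\mid X_i] - E[Y_i^\ast(1)]) - (E[Y_i^\ast(0)\mid X_i] - E[Y_i^\ast(0)])\bigr)^2\right]\Biggr\},
\]
which matches $\nu^2$ from Theorem \ref{theorem:main} up to the factor of two coming from the convention difference between Fr\"olich's $\sqrt{N}$-scaling with $N$ total observations and the paper's $\sqrt{n}$-scaling with $2n$ total observations.

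The main obstacle is not a difficult probabilistic step but rather a careful bookkeeping one: stating Fr\"olich's bound from his Theorem 2 in the right form (his formulation is in terms of sampling-based conditional moments of the observables $(Y_i, D_i, A_i, X_i)$), performing the clean change of variables to the potential-outcomes expressions $Y_i^\ast(a)$, and transparently tracking the $\sqrt{N}$ versus $\sqrt{n}$ scaling conventions so that the two expressions can be identified up to the resulting constant.
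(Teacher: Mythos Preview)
Your proposal is correct and follows essentially the same route as the paper's proof: translate Fr\"olich's bound into potential-outcomes notation via the identification $Y_i - \Delta(Q) D_i = Y_i^\ast(A_i)$ under Assumption~\ref{ass:assignment}, specialize to $p(X_i)=1/2$, use $E[Y_i^\ast(1)]=E[Y_i^\ast(0)]$ (equation \eqref{eq:Ystar1-Ystar0}) to recognize the squared-difference term as a variance, and conclude $V^{\rm F}=2\nu^2$. The only organizational difference is that the paper carries Fr\"olich's bound with the separate $\var[Y_i\mid\cdot]$, $\cov[Y_i,D_i\mid\cdot]$, $\var[D_i\mid\cdot]$ terms through one extra step before collapsing them into $\var[Y_i^\ast(a)\mid X_i]$, whereas you perform that collapse immediately; the substance is identical.
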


\noindent {\sc Proof}: 
The efficiency bound in Theorem 2 of \cite{frolich2007nonparametric} is
\begin{align*}
V &= \frac{1}{P \{C_i = 1\}^2} E\left[ \frac{\var[Y_i|X_i, A_i=1] - 2 \Delta(Q) \cov[Y_i, D_i | X_i, A_i = 1] + \Delta(Q)^2 \var[D_i|X_i, A_i = 1]}{P\{A_i = 1| X_i\}} \right. \\
&\hspace{1.5em} \left. + \frac{\var[Y_i|X_i, A_i=0] - 2 \Delta(Q) \cov[Y_i, D_i | X_i, A_i = 0] + \Delta(Q)^2 \var[D_i|X_i, A_i = 0]}{P\{A_i = 0| X_i\}} \right] \\
& \hspace{1.5em} + \frac{1}{P \{C_i = 1\}^2} E[  ( E[Y_i | X_i, A_i = 1] - E[Y_i | X_i, A_i = 0] \\
& \hspace{5em} - \Delta(Q) E[D_i | X_i, A_i = 1] + \Delta(Q) E[D_i | X_i, A_i = 0] )^2 ] \\
&= \frac{2}{P \{C_i = 1\}^2} E\left[ \var[\tilde Y_i(1)|X_i] - 2 \Delta(Q) \cov[\tilde Y_i(1), D_i(1) | X_i] + \Delta(Q)^2 \var[D_i(1)|X_i] \right. \\
& \hspace{1.5em} \left. + \var[\tilde Y_i(0)|X_i] - 2 \Delta(Q) \cov[\tilde Y_i(0), D_i(0) | X_i] + \Delta(Q)^2 \var[D_i(0)|X_i] \right] \\
& \hspace{1.5em} + \frac{1}{P \{C_i = 1\}^2} E\left[  \left( E[\tilde Y_i(1) | X_i] - E[\tilde Y_i(0) | X_i] - \Delta(Q) E[D_i(1) | X_i] + \Delta(Q) E[D_i(0) | X_i] \right)^2 \right] \\
& = \frac{2}{P \{C_i = 1\}^2} E\left[ 
\var[\tilde Y_i(1) | X_i] + \var[\tilde Y_i(0) | X_i] + \frac{1}{2} E^2 [\tilde Y_i(1) - \tilde Y_i(0) | X_i] \right. \\
& \hspace{1.5em} \left. - 2\Delta(Q) \left( \cov[\tilde Y_i(1), D_i(1) | X_i] +  \cov[\tilde Y_i(0), D_i(0) | X_i] + \frac{1}{2} E[\tilde Y_i(1) - \tilde Y_i(0) | X_i] E[D_i(1) - D_i(0) | X_i ]\right) \right.\\
& \hspace{1.5em} \left. + \Delta(Q)^2 
\left(   \var[D_i(1) | X_i] + \var[D_i(0) | X_i] + \frac{1}{2} E^2 [D_i(1) - D_i(0) | X_i]\right)
\right] \\
&= \frac{2}{P \{C_i = 1\}^2} E\left[ \var[Y_i^\ast(1)|X_i] + \var[Y_i^\ast(0)|X_i] + \frac{1}{2} E^2[Y_i^\ast(1) - Y_i^\ast(0)|X_i] \right]~,
\end{align*}
where the first equality follows by Theorem 2 in \cite{frolich2007nonparametric}, the second equality follows by (\ref{eq:TildeY}) and Assumption \ref{ass:assignment}, the third equality follows by direct calculation, and the fourth equality follows by (\ref{eq:Ystar-a}). Then we have 
\begin{align*}
& \frac{V}{2} - \nu^2 \\
&= \frac{1}{ P \{C_i = 1\}^2} \left( E\left[ \var[Y_i^\ast(1)|X_i] + \var[Y_i^\ast(0)|X_i] + \frac{1}{2} E^2[Y_i^\ast(1) - Y_i^\ast(0)|X_i] \right] \right. \\ 
& \hspace{7.5em} \left. - \var[Y_i^\ast(1)] - \var[Y_i^\ast(0)] + \frac{1}{2} \var[E[Y_i^\ast(1) + Y_i^\ast(0) | X_i]] \right) \\
&= \frac{1}{ 2 P \{C_i = 1\}^2} ( - 2\var[E[Y_i^\ast(1)|X_i]] - 2\var[E[Y_i^\ast(0)|X_i]] + E [E^2[Y_i^\ast(1) - Y_i^\ast(0)|X_i]] \\
& \hspace{7.5em} + \var[E[Y_i^\ast(1) + Y_i^\ast(0) | X_i]]) \\
&= \frac{1}{ 2 P \{C_i = 1\}^2} \left( E [E[Y_i^\ast(1) - Y_i^\ast(0)|X_i]^2] -  \var[E[Y_i^\ast(1) - Y_i^\ast(0) | X_i]] \right) \\
&= \frac{1}{ 2 P \{C_i = 1\}^2} E[Y_i^\ast(1) - Y_i^\ast(0)]^2 \\
&= 0~,
\end{align*}
where the first three equalities follow by inspection, and the fourth equation follows Assumption \ref{ass:Q}(e), and the fact that Assumption \ref{ass:Q}(d) implies $E[Y_i^\ast(1) - Y_i^\ast(0)] = 0$.
\qed

\begin{lemma} \label{lem:fwl}
Suppose $(Y_i, X_{1, i}', X_{2, i}')'$, $1 \leq i \leq n$ is an i.i.d.\ sequence of random vectors, where $Y_i$ takes values in $\mathbf R$, $X_{1, i}, Z_{1, i}$ take values in $\mathbf R^{k_1}$, and $X_{2, i}$ takes values in $\mathbf R^{k_2}$. Consider the linear regression
\[ Y_i = X_{1, i}' \beta_1 + X_{2, i}' \beta_2 + \epsilon_i~. \]
Define $\mathbb X_1 = (X_{1, 1}, \ldots, X_{1, n})'$, $\mathbb Z_1 = (Z_{1, 1}, \ldots, Z_{1, n})'$, $\mathbb X_2 = (X_{2, 1}, \ldots, X_{2, n})'$, $\mathbb X = (\mathbb X_1 \mathbb X_2)$, and $\mathbb Z = (\mathbb Z_1 \mathbb X_2)$. Further define $\mathbb P_2 = \mathbb X_2 (\mathbb X_2' \mathbb X_2)^{-1} \mathbb X_2'$ and $\mathbb M_2 = \mathbb I - \mathbb P_2$. Let $\hat \beta_{1, n}$ and $\hat \beta_{2, n}$ denote the IV estimators of $\beta_1$ and $\beta_2$ using $Z_1$ as an instrument for $X_1$. Define $\hat \epsilon_i = Y_i - X_{1, i}' \hat \beta_{1, n} - X_{2, i}' \hat \beta_{2, n}$. Define
\begin{align*}
\tilde {\mathbb Z}_1 & = \mathbb M_2 \mathbb Z_1 \\
\tilde {\mathbb X}_1 & = \mathbb M_2 \mathbb X_1~.
\end{align*}
Let
\[ \hat \Omega_n = (\mathbb Z' \mathbb X)^{-1} (\mathbb Z' \mathrm{diag}(\hat \epsilon_i^2: 1 \leq i \leq n) \mathbb Z) (\mathbb X' \mathbb Z)^{-1} \]
denote the heteroskedasticity-robust variance estimator of $(\hat \beta_{1, n}, \hat \beta_{2, n})$. Then, the upper-left $k_1 \times k_1$ block of $\hat \Omega_n$ equals
\[ (\tilde {\mathbb Z}_1' \tilde {\mathbb X}_1)^{-1} (\tilde {\mathbb Z}_1' \mathrm{diag}(\hat \epsilon_i^2: 1 \leq i \leq n) \tilde {\mathbb Z}_1) (\tilde {\mathbb X}_1' \tilde {\mathbb Z}_1)^{-1}~. \]
\end{lemma}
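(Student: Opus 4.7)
\textbf{Proof proposal for Lemma \ref{lem:fwl}.}
The plan is to reduce the statement to a block matrix identity. I would first compute the first block-row of $A^{-1} \mathbb{Z}'$, where $A = \mathbb{Z}'\mathbb{X}$, using the standard Schur complement formula for block inverses, and then multiply on the right by $\mathrm{diag}(\hat \epsilon_i^2)\mathbb{Z} (\mathbb{X}'\mathbb{Z})^{-1}$ and exploit symmetry.

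First I would write
\[ A = \begin{pmatrix} \mathbb{Z}_1'\mathbb{X}_1 & \mathbb{Z}_1'\mathbb{X}_2 \\ \mathbb{X}_2'\mathbb{X}_1 & \mathbb{X}_2'\mathbb{X}_2 \end{pmatrix}, \]
and observe that the Schur complement of the $(2,2)$-block is
\[ S = \mathbb{Z}_1'\mathbb{X}_1 - \mathbb{Z}_1'\mathbb{X}_2(\mathbb{X}_2'\mathbb{X}_2)^{-1}\mathbb{X}_2'\mathbb{X}_1 = \mathbb{Z}_1' \mathbb{M}_2 \mathbb{X}_1 = \tilde{\mathbb{Z}}_1' \tilde{\mathbb{X}}_1, \]
where the last equality uses $\mathbb{M}_2 = \mathbb{M}_2^2$ and the symmetry of $\mathbb{M}_2$. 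The block-inversion formula then gives that the first block-row of $A^{-1}$ equals $(S^{-1}, -S^{-1}\mathbb{Z}_1'\mathbb{X}_2(\mathbb{X}_2'\mathbb{X}_2)^{-1})$. Multiplying this row by $\mathbb{Z}' = (\mathbb{Z}_1, \mathbb{X}_2)'$ on the right yields
\[ S^{-1}\mathbb{Z}_1' - S^{-1}\mathbb{Z}_1'\mathbb{X}_2(\mathbb{X}_2'\mathbb{X}_2)^{-1}\mathbb{X}_2' = S^{-1}\mathbb{Z}_1'\mathbb{M}_2 = (\tilde{\mathbb{Z}}_1'\tilde{\mathbb{X}}_1)^{-1}\tilde{\mathbb{Z}}_1'. \]

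Next I would use the symmetry of $\mathbb{Z}'D\mathbb{Z}$, where $D = \mathrm{diag}(\hat \epsilon_i^2 : 1 \leq i \leq n)$, together with $(\mathbb{X}'\mathbb{Z})^{-1} = (A^{-1})'$, to write $\hat \Omega_n = A^{-1}(\mathbb{Z}'D\mathbb{Z})(A^{-1})'$. The upper-left $k_1 \times k_1$ block is then
\[ \bigl([A^{-1}]_{1\cdot} \mathbb{Z}'\bigr) D \bigl(\mathbb{Z} [A^{-1}]_{1\cdot}'\bigr), \]
where $[A^{-1}]_{1\cdot}$ denotes the first block-row of $A^{-1}$. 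Substituting the expression for $[A^{-1}]_{1\cdot} \mathbb{Z}' = (\tilde{\mathbb{Z}}_1'\tilde{\mathbb{X}}_1)^{-1}\tilde{\mathbb{Z}}_1'$ derived above and taking transposes gives exactly
\[ (\tilde{\mathbb{Z}}_1'\tilde{\mathbb{X}}_1)^{-1}(\tilde{\mathbb{Z}}_1' D \tilde{\mathbb{Z}}_1)(\tilde{\mathbb{X}}_1'\tilde{\mathbb{Z}}_1)^{-1}, \]
which is the claimed identity.

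There is no real technical obstacle here; the entire argument is linear algebra. The only step where it is easy to slip up is keeping the block-inversion formula and the transpose conventions consistent, in particular noting that $(\tilde{\mathbb{Z}}_1'\tilde{\mathbb{X}}_1)' = \tilde{\mathbb{X}}_1'\tilde{\mathbb{Z}}_1$ so that the right factor of the upper-left block is $(\tilde{\mathbb{X}}_1'\tilde{\mathbb{Z}}_1)^{-1}$ rather than $(\tilde{\mathbb{Z}}_1'\tilde{\mathbb{X}}_1)^{-1}$. I would also note in passing that the residuals $\hat \epsilon_i$ coincide with those from the partialled regression of $\mathbb{M}_2 Y$ on $\tilde{\mathbb{X}}_1$ using $\tilde{\mathbb{Z}}_1$ as instrument, since $\mathbb{X}_2'\hat \epsilon = 0$ implies $\mathbb{M}_2 \hat \epsilon = \hat \epsilon$; this makes the identity intuitively clear and confirms that it is the correct Frisch-Waugh-Lovell analogue for the heteroskedasticity-robust IV variance.
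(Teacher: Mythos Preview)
Your proposal is correct and follows essentially the same approach as the paper: both compute the first block-row of $(\mathbb Z'\mathbb X)^{-1}$ via the partitioned-inverse (Schur complement) formula and then carry out the block multiplication with $\mathbb Z'\mathrm{diag}(\hat\epsilon_i^2)\mathbb Z$ and $(\mathbb X'\mathbb Z)^{-1}$. Your write-up is simply more explicit where the paper says ``the conclusion then follows from elementary calculations,'' and your side remark on the residuals is a nice but inessential extra.
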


\begin{proof}
By the formula for the inverse of a partitioned matrix, the first $k_1$ rows of $(\mathbb Z' \mathbb X)^{-1}$ equal
\[\begin{pmatrix}
(\tilde {\mathbb Z}_1' \tilde {\mathbb X}_1)^{-1} & - (\tilde {\mathbb Z}_1' \tilde {\mathbb X}_1)^{-1} \mathbb Z_1' \mathbb X_2 (\mathbb X_2' \mathbb X_2)^{-1}
\end{pmatrix}~. \]
Furthermore,
\[ \mathbb Z' \mathrm{diag}(\hat \epsilon_i^2: 1 \leq i \leq n) \mathbb Z = \begin{pmatrix}
\mathbb Z_1' \mathrm{diag}(\hat \epsilon_i^2: 1 \leq i \leq n) \mathbb Z_1 & \mathbb Z_1' \mathrm{diag}(\hat \epsilon_i^2: 1 \leq i \leq n) \mathbb X_2 \\
\mathbb X_2' \mathrm{diag}(\hat \epsilon_i^2: 1 \leq i \leq n) \mathbb Z_1 & \mathbb X_2' \mathrm{diag}(\hat \epsilon_i^2: 1 \leq i \leq n) \mathbb X_2
\end{pmatrix}~. \]
The conclusion then follows from elementary calculations.
\end{proof}

\begin{lemma} \label{lem:FE_TSLS}
$$\hat{\alpha}_n^{\rm IV} = \hat \Delta_n^{\rm adj}$$
with working models $\hat m_{a, \tilde Y}(X_i, W_i) = \zeta_i' \hat \beta_n^Y$ and $\hat m_{a, D}(X_i, W_i) = \zeta_i' \hat \beta_n^D$ for $a \in \{0, 1\}$.
\end{lemma}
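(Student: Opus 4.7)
The plan is to show that both $\hat\alpha_n^{\rm IV}$ and $\hat\Delta_n^{\rm adj}$ reduce to the common ratio $\hat\alpha_n^Y/\hat\alpha_n^D$ of the reduced-form coefficients on $A_i$ from \eqref{eq:pfe-Y} and \eqref{eq:pfe-D}. For the 2SLS side, I would invoke the standard characterization of just-identified 2SLS as OLS using first-stage fitted values: $\hat\alpha_n^{\rm IV}$ equals the OLS coefficient on $\hat D_i$ in a regression of $Y_i$ on $\hat D_i$, $\zeta_i$, and the pair fixed effects, where $\hat D_i = \hat\alpha_n^D A_i + \zeta_i'\hat\beta_n^D + \hat\theta_{j(i),n}^D$ is the first-stage fitted value from \eqref{eq:pfe-D}. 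Whenever $\hat\alpha_n^D \neq 0$, the column space spanned by $\hat D$, $\zeta$, and the pair-fixed-effect dummies coincides with that spanned by $A$, $\zeta$, and the pair-fixed-effect dummies, so this second-stage regression is merely a change of basis of the reduced-form OLS \eqref{eq:pfe-Y}. Matching the coefficients that multiply $A_i$ on both sides yields $\hat\alpha_n^Y = \hat\alpha_n^D \cdot \hat\alpha_n^{\rm IV}$, i.e., $\hat\alpha_n^{\rm IV} = \hat\alpha_n^Y/\hat\alpha_n^D$.

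For the adjusted-estimator side, I would first note that because the working models $\hat m_{a,\tilde Y}$ and $\hat m_{a,D}$ do not depend on $a$, direct manipulation of the definitions gives
\[ \hat\psi_n^{\rm adj}(1) - \hat\psi_n^{\rm adj}(0) = \frac{1}{n}\sum_{1 \leq i \leq 2n}(2A_i - 1)(Y_i - \zeta_i'\hat\beta_n^Y)~, \]
and the analogous expression with $D_i$ and $\hat\beta_n^D$ for the denominator. Since $\sum_{i \in \{\pi(2j-1), \pi(2j)\}}(2A_i - 1) = 0$ for every pair $j$, we may replace $Y_i$ and $\zeta_i$ by their within-pair demeaned counterparts $\tilde Y_i$ and $\tilde \zeta_i$ without changing the sum, and write $(2A_i - 1) = 2\tilde A_i$ with $\tilde A_i = A_i - 1/2$. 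By Frisch--Waugh--Lovell, the coefficients $(\hat\alpha_n^Y, \hat\beta_n^Y)$ from \eqref{eq:pfe-Y} coincide with the OLS coefficients from regressing $\tilde Y_i$ on $(\tilde A_i, \tilde \zeta_i)$, so the corresponding $\tilde A_i$-orthogonality normal equation reads $\sum_i \tilde A_i(\tilde Y_i - \tilde\zeta_i'\hat\beta_n^Y) = (\sum_i \tilde A_i^2)\hat\alpha_n^Y = (n/2)\hat\alpha_n^Y$, using $\tilde A_i^2 \equiv 1/4$. Hence the numerator equals $\hat\alpha_n^Y$; repeating the same argument with $D$ in place of $Y$ yields that the denominator equals $\hat\alpha_n^D$, so $\hat\Delta_n^{\rm adj} = \hat\alpha_n^Y/\hat\alpha_n^D = \hat\alpha_n^{\rm IV}$.

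The main obstacle is the bookkeeping in the second step: one must carefully exploit the within-pair balance identity $\sum_{i \in \{\pi(2j-1), \pi(2j)\}}(2A_i - 1) = 0$ to convert the $(2A_i-1)$-weighted sums into within-pair demeaned quantities, and then recognize the resulting expression as precisely the $\tilde A_i$-orthogonality normal equation coming from the FWL representation of \eqref{eq:pfe-Y}. The 2SLS-to-OLS reduction in the first step, by contrast, is a standard indirect-least-squares argument that does not rely on anything specific to the matched-pair design.
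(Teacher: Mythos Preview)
Your proposal is correct and follows essentially the same two-step strategy as the paper: reduce both $\hat\alpha_n^{\rm IV}$ and $\hat\Delta_n^{\rm adj}$ to the common ratio $\hat\alpha_n^Y/\hat\alpha_n^D$, using Frisch--Waugh--Lovell for the adjusted-estimator side. The only minor difference is on the 2SLS side: the paper invokes the FWL/subvector formula for IV directly---writing $\hat\alpha_n^{\rm IV} = (\sum_i \tilde A_i D_i)^{-1}(\sum_i \tilde A_i Y_i)$ with $\tilde A_i$ the residual from projecting $A$ on $\zeta$ and the pair dummies, and then dividing numerator and denominator by $\sum_i \tilde A_i^2$---whereas you use the equivalent indirect-least-squares/column-span argument; both are standard and yield the same identity.
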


\begin{proof}
Denote by $\hat \alpha_n^Y$ and $\hat \alpha_n^D$ the ordinary least squares estimators of $\alpha^Y$ and $\alpha^D$ in \eqref{eq:pfe-Y}--\eqref{eq:pfe-D}. The Frisch-Waugh-Lovell theorem implies that 
\begin{eqnarray*}
\hat \alpha_n^Y &=&  \hat \psi_n^{\rm adj}(1) - \hat \psi_n^{\rm adj}(0) \\
\hat \alpha_n^D &=& \hat \phi_n^{\rm adj}(1) - \hat \phi_n^{\rm adj}(0)~.
\end{eqnarray*}
It thus follows immediately that
\[ \frac{\hat \alpha_n^Y}{\hat \alpha_n^D} = \hat \Delta_n^{\rm adj} \] when $\hat m_{a, \tilde Y}(X_i, W_i) = \zeta_i' \hat \beta_n^Y$ and $\hat m_{a, D}(X_i, W_i) = \zeta_i' \hat \beta_n^D$ for $a \in \{0, 1\}$. Next, the subvector formula for IV implies that
\[ \hat \alpha^{\rm IV}_n = \left ( \frac{1}{n} \sum_{1 \leq i \leq 2n} \tilde A_i D_i \right )^{-1} \left ( \frac{1}{n} \sum_{1 \leq i \leq 2n} \tilde A_i Y_i \right )~, \]
where $\tilde A_i$ is the residual in the projection of $A$ on $\zeta$ and fixed effects. 
From this it follows that 
\[ 
\hat{\alpha}^{\rm IV}_n = \frac{\left( \frac{1}{n} \sum_{1 \leq i \leq 2n} \tilde{A}_i^2 \right)^{-1} \left( \frac{1}{n} \sum_{1 \leq i \leq 2n} \tilde{A}_i Y_i \right)}{\left( \frac{1}{n} \sum_{1 \leq i \leq 2n} \tilde{A}_i^2 \right)^{-1} \left( \frac{1}{n} \sum_{1 \leq i \leq 2n} \tilde{A}_i D_i \right)} = \frac{\hat{\alpha}_n^Y}{\hat{\alpha}_n^D} = \hat{\Delta}_n^{\rm adj}~,
\]
as desired.
\end{proof}

\end{document}